\documentclass{ieeeaccess}
\usepackage{cite}
\usepackage[hyphens]{url}
\usepackage{amsmath,amssymb,amsthm,booktabs,tabularx,array,graphicx,textcomp,microtype,enumitem,float}
\usepackage{etoolbox}
\definecolor{gray}{gray}{0.5}

\newcommand{\code}[1]{{\fontseries{m}\fontshape{n}\ttfamily #1}}
\newtheorem{proposition}{Proposition}
\graphicspath{{figures/}}
\makeatletter
\patchcmd{\@maketitle}{Digital Object Identifier\space\@doi}{}{}{}
\patchcmd{\IEEEbiographynophoto}{\vskip 4\baselineskip plus 1fil minus 0\baselineskip}{\vskip 1.5\baselineskip}{}{}
\newlength{\reviewabstractwidth}
\let\abstractwidth\reviewabstractwidth
\let\reviewheadings\ps@headings
\def\ps@headings{\reviewheadings
  \def\@oddfoot{{\footervolfont Manuscript for review\hfill\thepage}}
  \let\@evenfoot\@oddfoot}
\let\reviewtitlepage\ps@titlepage
\def\ps@titlepage{\reviewtitlepage
  \def\@oddfoot{{\footervolfont Manuscript for review\hfill\thepage}}
  \let\@evenfoot\@oddfoot}
\makeatother
\begin{document}
\pagestyle{headings}
\history{}
\doi{}
\vol{}\year{2026}
\title{GPU-Initiated Discrete Simulated Bifurcation: Low-Latency Requests and Streaming Dense Couplings}
\author{\uppercase{Yaocheng Chen}\authorrefmark{1}}
\address[1]{Instituto de Astronomia, Geof\'isica e Ci\^encias Atmosf\'ericas (IAG-USP), Universidade de S\~ao Paulo, Rua do Mat\~ao 1226, S\~ao Paulo, SP 05508-090, Brazil (e-mail: yaochengchen@usp.br)}
\markboth{Y. Chen: GPU-Initiated Discrete Simulated Bifurcation}{Y. Chen: GPU-Initiated Discrete Simulated Bifurcation}
\tfootnote{Yaocheng Chen is supported by a FAPESP Postdoctoral Fellowship (Grant No. 25/25586-5).}
\corresp{Corresponding author: Yaocheng Chen (e-mail: yaochengchen@usp.br).}
\begin{abstract}
GPU-based optimization faces two communication bottlenecks: coordinating frequent requests and delivering dense models that exceed device memory. We present a discrete simulated bifurcation (dSB) architecture that addresses both through NVIDIA DOCA GPUNetIO. For resident models, a persistent service receives field updates, executes each solve within one GPU thread block, and returns the result. Exact integer coupling sums, GPU work queues, and batched transmission keep the receive--solve--reply path on the device without a dedicated CPU data-path core. In comparisons with socket-based servers using the same solver, the largest latency gains occur under concurrent load. As the offered load increases from 400 to 800 thousand requests per second, median round-trip latency rises by only 6\%. At the highest tested load, median and 99th-percentile latencies are 189 and 218~$\mu$s, compared with 288 and 609~$\mu$s for the tuned persistent CPU proxy across repeated runs. For models larger than device memory, a streaming solver retains dynamical state on the GPU and reuses incoming coupling tiles across replicas. It evaluates ten-million-variable dense binary matrices at approximately 307~Gb/s, consuming a 12.5-TB logical matrix through a 64-MiB packet buffer. Ground-state recovery on planted instances and agreement with reference executions verify the computation. Together, the two modes scale dSB to concurrent requests and dense models beyond GPU memory.

\end{abstract}
\begin{keywords}
DOCA GPUNetIO, GPUDirect RDMA, GPU-initiated networking, Ising machine, low latency, persistent kernel, portfolio optimization, simulated bifurcation.
\end{keywords}
\titlepgskip=-21pt
\maketitle

\section{Introduction}
Many combinatorial decisions admit Ising or quadratic unconstrained binary optimization (QUBO) formulations \cite{lucas,glover}. Simulated bifurcation (SB) solves these formulations through the dynamics of coupled nonlinear oscillators \cite{goto2019,goto2021}. SB is also used for repeated decisions driven by incoming observations. Examples include market-graph search for pairs trading, detection-to-track assignment in automotive perception, time--frequency resource allocation in wireless networks, and online beam selection \cite{pairstrading,tracking,wireless5g,rtbbo}. Reported systems include a 33-$\mu$s market-feed-to-order trading loop, vehicle-mountable tracking above 20 frames/s, and wireless resource allocation below 0.5~ms. The trading, tracking, and online beam-control systems use ballistic SB (bSB), while the wireless resource-allocation study reports an SB machine. These applications motivate efficient repeated execution; here we study the communication path for discrete SB (dSB).

In a dynamic decision process, instance $\ell$ has energy
\begin{equation}
E_{\ell}(s)=-\tfrac12 s^\top J_{\ell}s-h_{\ell}^\top s.
\label{eq:dynamic}
\end{equation}
The instance index $\ell$ denotes a new decision, separately from the integration steps within one solve. Some applications reuse $J_{\ell}$ and update only the field $h_{\ell}$; others update the quadratic coefficients as observations change. In online black-box optimization, feedback trains a quadratic surrogate, and an embedded Ising machine selects the next action from the updated model \cite{rtbbo}. The representation of $J_{\ell}$ determines the cost of constructing, storing, and delivering each instance, so model updates and communication consume part of the decision budget.

Graphics processing units (GPUs) offer a programmable environment for combining model construction, numerical optimization, and application processing. A conventional GPU service receives a request on the CPU, transfers its inputs, dispatches GPU work, and returns the result through the host network stack. For a short solve, this communication and dispatch can occupy a substantial fraction of the round trip. GPU-initiated networking provides a direct device-side path: GPUnet, GPUrdma, GPU-Ether, and related systems established the communication mechanisms \cite{gpunet,gpurdma,gpuether,janus}, and DOCA GPUNetIO exposes receive and transmit operations to CUDA kernels \cite{gpunetio}. An optimization service must additionally coordinate model reuse, persistent scheduling, and solver execution without host dispatch between integration steps. Higher network bandwidth also increases the demand for coordination, while gains in general-purpose single-thread performance have slowed \cite{hennessy2019}. Moving packet handling and solver dispatch onto the device reduces the host coordination required as arrival rates and queue counts grow. NVIDIA reports this coordination cost even for CPU-driven DPDK pipelines that already place packets directly in GPU memory \cite{gpunetioblog}.

These workloads expose two complementary scaling limits. At small problem sizes, scaling is primarily a request-rate problem: the model fits in GPU memory, but repeated host coordination can limit the number of concurrent decisions. At large problem sizes, scaling becomes a coefficient-delivery problem: the dense model requires $O(n^2)$ storage, while the $O(Bn)$ state of $B$ solver replicas can remain resident. The first regime calls for efficient receive--solve--reply coordination; the second calls for sustained delivery of coefficients to the solver.

Our approach is to bring incoming coefficients directly to GPU-resident dSB state and perform frequent scheduling on the device. A resident-model service accepts small field updates and executes each complete trajectory within one persistent worker block. We compare it with socket-based launch-per-request and CPU-proxied persistent servers using the same solver, protocol, and client, measuring latency under load and CPU data-path requirements. A streaming solver consumes nonresident coupling tiles while retaining positions, momenta, spins, and accumulators on the GPU. Each coupling evaluation requires a matrix pass. The request service schedules complete solves on the GPU; the streaming implementation schedules tiles on the GPU with host coordination at pass boundaries.

The contributions are:
\begin{itemize}[leftmargin=*]
\item A dSB solver that assigns each request to one block, retains dynamical state in registers and models in shared memory, and computes exact coupling sums with XOR/popcount or DP4A (Section~\ref{sec:solver}).
\item A resident-model protocol and GPU-resident receive--solve--reply pipeline with bounded work queues and batched transmission (Sections~\ref{sec:protocol} and~\ref{sec:server}).
\item An evaluation against two socket-based CPU-mediated implementations across 19 request configurations, examining dedicated CPU data-path cores, host-state dependence, and latency under load (Sections~\ref{sec:method} and~\ref{sec:results}).
\item A streaming dSB implementation that evaluates dense $n=10^7$ binary coupling matrices at 306--307~Gb/s for one or eight replicas, with planted-instance and reference-execution validation (Section~\ref{sec:stream}).
\end{itemize}

\section{Background and related work}\label{sec:related}
\subsection{Discrete simulated bifurcation}
SB minimizes the Ising energy
\begin{equation}
E(s) = -\tfrac12 \sum_{i\neq j} J_{ij} s_i s_j - \sum_i h_i s_i,\qquad s\in\{-1,+1\}^n,
\label{eq:energy}
\end{equation}
by integrating a classical system of positions $x_i$ and momenta $y_i$ through a bifurcation \cite{goto2019,goto2021}. In the discrete variant the coupling force depends only on $\mathrm{sign}(x)$. With detuning $\delta=1$, pump schedule $p_k=k/(K-1)$ for $K>1$ (and $p_0=0$ for $K=1$), coupling gain $\xi$ and time step $\Delta t$, one step is
\begin{align}
a_i &= \textstyle\sum_{j} J_{ij}\,\mathrm{sign}(x_j) + h_i, \label{eq:acc}\\
y_i &\leftarrow y_i + \bigl(-(\delta-p_k)x_i + \xi a_i\bigr)\Delta t, \notag\\
x_i &\leftarrow x_i + \Delta t\,\delta\, y_i. \label{eq:update}
\end{align}
followed by the inelastic wall: if $|x_i|>1$ then $x_i\leftarrow\mathrm{sign}(x_i)$ and $y_i\leftarrow 0$. The answer is $\mathrm{sign}(x)$ after $K$ steps; $B$ independent replicas are integrated from different initial states and the lowest-energy one is returned. We use $\mathrm{sign}(0)=-1$ and $\xi=0.5\sqrt{n-1}/\|J\|_F$ by default, as in \cite{ourrepo}.

\subsection{GPU and FPGA SB implementations}
Field-programmable gate array (FPGA) implementations of SB exploit specialized arithmetic pipelines and tightly integrated I/O to achieve low latency, with multi-chip designs extending the coupling computation \cite{goto2021,tatsumura2021,kashimata2024}. Trading systems demonstrate the value of this specialization in a complete decision loop \cite{tatsumura2023trading,yamasaki2020arb,tatsumura2021heart}. GPU implementations offer a complementary advantage: solver variants, coefficient formats, and surrounding numerical or learning stages can be changed within a common software environment. The companion GPU implementation \cite{ourrepo} provides whole-device and bit-packed execution paths. Here, an entire dSB request is confined to one block, allowing independent requests to execute concurrently while persistent device-side agents handle communication.

\subsection{GPU-centric networking}
GPUnet and GPUrdma support communication initiated by GPU threads \cite{gpunet,gpurdma}, and GPUDirect Async investigates device-driven communication on InfiniBand \cite{agostini2018}. GPU-Ether and Janus provide GPU-native packet I/O and virtualized packet-processing support on Ethernet \cite{gpuether,janus}. DOCA GPUNetIO supplies device APIs for receiving into GPU memory and posting sends \cite{gpunetio,gpudirect}; persistent kernels keep these device-side agents available between requests \cite{gupta2012}. NCCL's GPU-initiated networking also includes a GPUNetIO-backed transport \cite{hamidouche2025}, and a recent survey reviews GPU-centric communication methods \cite{unat2026}.

Kernel bypass and GPU initiation address different parts of the communication path. DPDK bypasses the operating-system network stack \cite{dpdk}; GPUdev can combine it with GPUDirect RDMA and persistent GPU processing while the CPU manages network queues and notifies the GPU \cite{gpudevblog}. GPUNetIO additionally permits CUDA kernels to control receive and transmit operations directly \cite{gpunetioblog}. Placing packets directly in GPU memory therefore does not by itself take the CPU out of orchestration. Host-driven designs also support high-throughput bulk ingest: Romein reaches high Ethernet rates for a GPU workload with a CPU-driven receive path \cite{romein2025}.

Our work integrates these communication mechanisms with coefficient delivery, resident solver state, and persistent scheduling for dSB. The evaluation covers request-rate scaling against socket-based servers using the same solver and model-size scaling through streamed coupling evaluation.

\section{Service protocol}\label{sec:protocol}
Every message is one raw Ethernet frame with a dedicated EtherType, a 64-byte header including the 14-byte Ethernet header, and a payload (Table~\ref{tab:proto}). Raw frames were chosen over UDP so that the network interface controller (NIC) can select traffic by EtherType alone and the GPU does not have to parse IP headers; the design carries over to UDP by matching on a port instead. A \code{SOLVE} frame for $n=128$ is 320 bytes and the largest frame in the request-service tests, a model fragment, is 1472 bytes, so the standard MTU suffices.

\begin{table}[!t]\centering\small
\caption{Message types. All fields after the Ethernet header are little-endian and naturally aligned.}
\label{tab:proto}
\begin{tabularx}{\columnwidth}{l>{\raggedright\arraybackslash}X}\toprule
Message & Payload \\ \midrule
\code{LOAD} & one fragment (\,$\le$1408~B) of a packed coupling model; header: model id, $n$, format, offset, fragment index and count, total size, generation, default $\xi$ \\
\code{LOAD\_ACK} & none; sent when all fragments of a generation have arrived \\
\code{SOLVE} & optional field $h$ ($n$ signed 16-bit integers); header: model id, $n$, $K$, $B$, seed, $\xi$, $\Delta t$ \\
\code{RESULT} & best spin configuration, $\lceil n/32\rceil$ words; header: status, $2E$, best replica, worker id, server-side timestamps (queue, solve, transmit) \\
\bottomrule
\end{tabularx}
\end{table}

A model is uploaded once and remains resident in a GPU model store of 64 slots of 512~KiB until it is replaced by a new generation. Three packed formats are supported (with $W=\lceil n/32\rceil$): \code{PM1}, one bit plane of $nW$ words for dense $\pm1$ couplings; \code{TERNARY}, a plane for $J_{ij}=+1$, a plane for $J_{ij}=-1$ and a row constant $|P_i|-|M_i|$; and \code{INT8}, dense int8 rows. The field travels with each request. This matches problems in which the quadratic part changes slowly and the linear part changes quickly. For mean--variance portfolio selection with a cardinality penalty \cite{markowitz1952},
\begin{equation}
\min_{x\in\{0,1\}^n} q\,x^\top\Sigma x-\mu^\top x+\lambda(\mathbf 1^\top x-k)^2,
\end{equation}
the substitution $x=(1+s)/2$ gives $J_{ij}=-Q_{ij}/2$ for $i\ne j$ and $J_{ii}=0$ and $h_i=-\tfrac12\bigl(\sum_j Q_{ij}+c_i\bigr)$ with $Q=q\Sigma+\lambda\mathbf 1\mathbf 1^\top$ and $c=-\mu-2\lambda k\mathbf 1$. With fixed covariance and penalty parameters, $J$ is reused, and each update of the expected returns $\mu$ gives a new $h$. The client quantizes $J$ to int8 and $h$ to int16 with a common scale. All solver and reference evaluations use the resulting quantized Ising objective, with cardinality enforced through the penalty term.

\section{Request-per-block solver}\label{sec:solver}
\subsection{Mapping}
One request is solved by one thread block of $T=256$ threads. With $n_p=\lceil n/32\rceil\cdot 32$, the $Bn_p$ pairs (replica $b$, row $r$) are distributed cyclically, pair $p=bn_p+r$ to thread $p \bmod T$, and each thread keeps the $x$ and $y$ of its at most $P=16$ pairs in registers for the whole trajectory; hence $Bn_p\le TP=4096$. Because $n_p$ and $T$ are multiples of 32, the 32 lanes of a warp always hold 32 consecutive rows of the same replica. After the update the new signs of a warp are therefore exactly one 32-bit word of that replica's sign vector: \code{\_\_ballot\_sync} forms the word and lane~0 stores it, without atomics or shuffles. Sign vectors are double buffered in shared memory, so a step needs a single \code{\_\_syncthreads()}: all warps read buffer $k \bmod 2$ and write buffer $(k+1)\bmod 2$.

Shared memory holds the coupling rows, padded to a stride $W_p=4\cdot\text{odd}\ge W$ words so that the 16-byte loads of eight consecutive lanes fall in distinct banks, the two sign buffers of $B$ rows each, the field and the per-replica energies. The footprint is
\begin{equation}
S = 4n_pW_p\,c + 8BW_p + 4n_p + 8B + S_0,
\label{eq:smem}
\end{equation}
with $c=1$ for \code{PM1}, $c=2$ (plus $4n_p$ for the row constants) for \code{TERNARY}, and $S_0=1024$~B for the request header and the reply staging area; for \code{INT8} the coupling term is $n_pL_s$ with a byte stride $L_s=16\cdot\text{odd}\ge n_p$ and the sign buffers are $2BL_s$ bytes. The server fixes one $S$ for all workers, which sets how many blocks are resident per streaming multiprocessor (SM) and so the number of requests in service at once; with $S=48$~KiB two blocks fit on each of the 188 SMs of the server GPU, 376 in all.

\subsection{Exact coupling sums}
\begin{proposition}\label{prop:exact}
Let $J$ be integer with zero diagonal, $s\in\{\pm1\}^n$ and $v_j=\mathbf 1[s_j=+1]$ packed into words.
(i) If $J_{ij}=2u_{ij}-1$ for $j\neq i$ (bit $u_{ij}=1$ for $J_{ij}=+1$) with $u_{ii}=0$ and zero padding, then $\sum_{j} J_{ij}s_j = n-2\,\mathrm{popc}(u_i\oplus v)+s_i$.
(ii) If $J_{ij}\in\{-1,0,1\}$ with $P_{ij}=\mathbf 1[J_{ij}=1]$ and $M_{ij}=\mathbf 1[J_{ij}=-1]$, then
\[
\begin{aligned}
\sum_j J_{ij}s_j
&=2\bigl(\mathrm{popc}(P_i\wedge v)-\mathrm{popc}(M_i\wedge v)\bigr)\\
&\quad-\bigl(|P_i|-|M_i|\bigr).
\end{aligned}
\]
(iii) If $|J_{ij}|\le127$, $|h_i|<2^{15}$ and $n\le 2048$, the int32 DP4A sum $a_i$ of \eqref{eq:acc} is exact and $|a_i|<2^{24}$, so $a_i$ is exact in FP32.
Consequently, if \eqref{eq:update} is evaluated with the same IEEE round-to-nearest operations in the same order, the GPU and a CPU reference produce identical trajectories, spins and energies.
\end{proposition}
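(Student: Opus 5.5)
The plan is to treat each part as an elementary identity on $\pm1$ values, using the encoding $s_j=2v_j-1$ throughout, and then obtain the consequence by a determinism argument.

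\emph{Part (i).} For a bit pair, $\mathrm{sign}$ agreement gives $J_{ij}s_j=+1$ exactly when $u_{ij}=v_j$, so $J_{ij}s_j=1-2(u_{ij}\oplus v_j)$ for every $j\neq i$. Summing over the $n-1$ off-diagonal indices gives $(n-1)-2\sum_{j\ne i}(u_{ij}\oplus v_j)$. The popcount over all packed positions adds two extra kinds of terms, which must be accounted for separately.
\begin{itemize}[leftmargin=*]
\item The padding positions have $u=v=0$ and contribute nothing.
\item The diagonal position contributes $u_{ii}\oplus v_i=v_i$.
\end{itemize}
Hence $\mathrm{popc}(u_i\oplus v)=\sum_{j\neq i}(u_{ij}\oplus v_j)+v_i$. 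Substituting, and using $J_{ii}=0$, gives $\sum_j J_{ij}s_j=(n-1)-2\mathrm{popc}+2v_i=n-2\mathrm{popc}+s_i$, since $2v_i-1=s_i$. This diagonal bookkeeping is the only subtle point in (i). I will also require that $v$ be zero in its padding bits, or else that $u$ be zero there; zero padding of $u$ alone suffices only if the XOR is masked, so I will state that padding of $v$ is zero as well.

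\emph{Part (ii).} Write $J_{ij}=P_{ij}-M_{ij}$ and $s_j=2v_j-1$. Then
\[
\sum_j J_{ij}s_j=2\sum_j (P_{ij}-M_{ij})v_j-\sum_j(P_{ij}-M_{ij}),
\]
which is the claim, because $\sum_j P_{ij}v_j=\mathrm{popc}(P_i\wedge v)$ and $\sum_j P_{ij}=|P_i|$, and similarly for $M$. The diagonal and the padding are zero in $P$ and $M$, so they need no correction here.

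\emph{Part (iii).} Each DP4A product satisfies $|J_{ij}s_j|\le127$, where $s_j$ is encoded as an int8 $\pm1$. The partial sums are therefore bounded by $127n+|h_i|\le127\cdot2048+2^{15}-1=260096+32767<2^{19}<2^{31}$. Consequently no int32 overflow occurs at any intermediate stage, regardless of summation order, so the integer sum is exact. Every integer of magnitude below $2^{24}$ is representable in binary32, so the conversion to FP32 is exact.

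\emph{Consequence.} I will argue by induction on the step $k$. Suppose the GPU and the reference hold bitwise-identical $x,y$ at step $k$. Then their signs agree, including the convention $\mathrm{sign}(0)=-1$, because the sign is a deterministic function of the float bits. By (i)--(iii), together with exact int-to-float conversion of the bounded sums for (i) and (ii), both implementations hold the identical exact FP32 value $a_i$. The update \eqref{eq:update} and the wall test are then a fixed sequence of IEEE round-to-nearest operations applied to identical operands. By the correct-rounding determinism of IEEE 754, this sequence yields identical results. The hypothesis ``same operations in the same order'' must exclude FMA contraction unless both sides use it, and I will make that explicit. The initial states come from the same seeded generator, so the induction starts with identical values.

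Energies are integer combinations of the exact sums. Each is computed as $2E=-\sum_i s_i a_i'-2h^\top s$, which stays well within the int32 range, so the energies are exact and the selection of the best replica agrees. I expect the main obstacle to be the consequence step rather than the identities. The difficulty is pinning down precisely which floating-point behaviors must be matched (FMA contraction, flush-to-zero, the sign of zero) so that ``same operations'' is a genuine, checkable hypothesis.
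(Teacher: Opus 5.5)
Your proof is correct and follows essentially the same route as the paper. In (i) you do the same per-column $\pm1$ bookkeeping with separate padding and diagonal corrections, counting from the popcount side rather than from all $32W$ columns; in (ii) you use the decomposition $J=P-M$, $s=2v-1$; in (iii) you bound $|a_i|$ below $2^{24}$; and for the consequence you argue that identical exact integers plus identical correctly rounded IEEE operations give identical trajectories. Your explicit induction, the intermediate-overflow remark, and the requirement that FMA contraction and subnormal handling match only make explicit what the paper states in the text following its proof.
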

\begin{proof}
(i) A column $j<n$, $j\neq i$ contributes $+1$ when $u_{ij}=v_j$ and $-1$ otherwise; the $32W-n$ padding columns have $u=v=0$ and contribute $+1$ each; the diagonal, stored as $u_{ii}=0$, contributes $+1$ if $v_i=0$ and $-1$ otherwise, that is $-s_i$. Summing over all $32W$ columns gives $32W-2\,\mathrm{popc}(u_i\oplus v)$, and removing the padding and diagonal terms gives the identity. (ii) $\sum_j P_{ij}s_j=2\,\mathrm{popc}(P_i\wedge v)-|P_i|$ and likewise for $M$. (iii) $|a_i|\le 127\cdot2047+2^{15}<2^{24}$. The final statement follows because $a_i$ is the same integer on both sides and every remaining operation is a correctly rounded IEEE operation on identical operands.
\end{proof}
The GPU update uses explicitly rounded multiplication, addition, and subtraction (\code{\_\_fmul\_rn}, \code{\_\_fadd\_rn}, and \code{\_\_fsub\_rn}). These operations prevent fused multiply--add contraction. The host reference is compiled with \code{-ffp-contract=off}. Initial states are drawn from a counter-based hash of (seed, replica, row), so the reference can regenerate them. The self-test and the client use this property for the energy and reference-output checks specified in Section~\ref{sec:method}. The equivalence statement uses identical initial states, parameters, tie breaking, rounding, and treatment of subnormal values.

\subsection{Cost of one request}
A \code{PM1} step issues $Bn\cdot 4\lceil W/4\rceil$ population counts per block (twice as many AND/POPC pairs for \code{TERNARY}), and an \code{INT8} step $Bn\,n_p/4$ DP4A instructions. With an instruction throughput $r$ per SM per cycle and clock frequency $f$, the solve time is approximately
\begin{equation}
t_{\mathrm{solve}}\approx t_{\mathrm{load}} + K\Bigl(\frac{\mathrm{ops}}{r\,f\,/\,\beta} + t_{\mathrm{bar}}\Bigr),
\label{eq:tsolve}
\end{equation}
where $\mathrm{ops}$ is the instruction count per step and $\beta$ is the number of blocks sharing the SM, $t_{\mathrm{bar}}$ the cost of a block barrier and $t_{\mathrm{load}}$ the one-time copy of the model rows from L2 into shared memory. Table~\ref{tab:single} gives the measured values: the time per step is 0.5--0.6~$\mu$s for the smallest blocks, where the barrier and the instruction latency rather than the throughput of the SM set the pace, and grows to 4.7~$\mu$s for $n=128$, $B=32$, \code{INT8}. A 100-step request on 128 variables with up to eight replicas therefore costs 60--150~$\mu$s, so the 10--19~$\mu$s of host-side work that Section~\ref{sec:server} removes is a noticeable fraction of a short request.

\section{Server architectures}\label{sec:server}
\begin{figure*}[tp]\centering
\includegraphics[width=\textwidth]{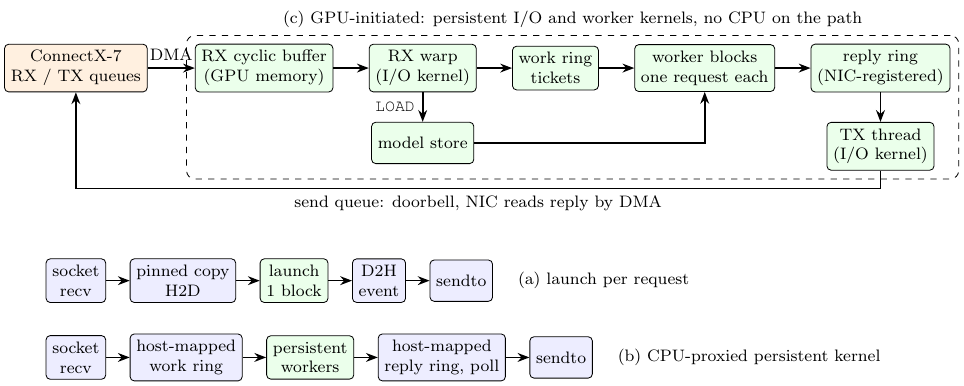}
\caption{The three servers. All share the wire format, the client and the per-request solver; they differ only in how a frame reaches a worker block and how the reply leaves. Blue: CPU; green: GPU; orange: NIC.}
\label{fig:arch}
\end{figure*}

Figure~\ref{fig:arch} shows the three servers compared in this paper. The two persistent designs share a worker loop and the ring organization shown in Fig.~\ref{fig:arch}(c): a work ring of $C=4096$ items, each pointing at a frame and carrying a sequence number and a receive timestamp, and a reply ring of 4096 slots of 512~B. A worker takes a ticket $t$ with one atomic increment, waits for item $t\bmod C$ to carry sequence number $t+1$, copies the request header and field into shared memory, solves, allocates a reply slot, writes the reply and publishes a ready word that contains the slot's sequence number and the reply length. What differs is who fills the work ring and who drains the reply ring.

\subsection{Launch per request}
The conventional design uses a nonblocking \code{AF\_PACKET} socket to receive each frame, copies it into pinned memory, and on one of $S_t=4$ CUDA streams issues a host-to-device copy, a one-block launch of the solver, a device-to-host copy of the reply and an event; completed events are polled and their replies sent. Up to $S_t$ requests are in flight. Models are reassembled on the CPU and copied to the GPU model store.

\subsection{CPU-proxied persistent kernel}
The second design removes launches and explicit copies but keeps the CPU on the data path. The CPU writes each frame into a pool of host-mapped pinned buffers and publishes a work item in a host-mapped ring; 376 persistent worker blocks, running the same worker loop as the GPU-initiated server, read the request over PCIe and write the reply into a host-mapped reply ring, which the CPU polls and sends. The same mapped-buffer organization would also work with kernel-bypass networking; here it uses the kernel socket path.

\subsection{GPU-initiated server}
The third design keeps the request data path on the GPU after startup. The host creates a DOCA Flow root pipe that steers the service EtherType to one receive queue, a cyclic GPUNetIO receive queue whose 16,384-slot buffer (64~MiB) lives in GPU memory, and a send queue of depth 4096 whose completions are also handled on the GPU; the reply ring is allocated in GPU memory and registered with the NIC through a DMA-BUF. It then launches two kernels on two streams, an I/O kernel of two 32-thread blocks first, so that it is resident before the workers fill the GPU, and the worker kernel with $N_w=372$ blocks second (four of the 376 resident slots are left free so that the device is never completely booked by persistent blocks). The kernels communicate only through the rings and need no cooperative launch.
\begin{itemize}[leftmargin=*]
\item \emph{The receive warp} (block 0 of the I/O kernel) calls the warp-scope receive of GPUNetIO in a loop, taking up to 64 frames per call with a timeout of 500~ns: the call returns when 64 frames have arrived or when the timeout has expired. The timeout cannot be zero, because on this platform a zero timeout blocks until a frame arrives and the warp would never see the shutdown flag on an idle link; our first runs used 5~$\mu$s. It reads the 64-byte header of each frame with L2-coherent loads, validates it, and for a \code{SOLVE} frame copies the frame out of the NIC's cyclic buffer into a frame buffer owned by the work slot (a 320-byte warp copy for $n=128$), stamps the receive time from the GPU global timer and publishes the item. For a \code{LOAD} fragment the warp copies the payload into the model store and, when all fragments of a generation have arrived, publishes the model and queues an acknowledgement.
\item \emph{Worker blocks} run the common worker loop; the worker that solves a request stamps its claim time and the end of the solve into the reply.
\item \emph{The transmit thread} (thread 0 of block 1 of the I/O kernel) scans the reply ring in order and stamps the time each reply has waited since the end of its solve. It then writes one send descriptor per ready reply, up to 32, requests a completion only for the last, rings the doorbell once, and waits for that completion before it releases the slots. The NVIDIA \code{gpunetio\_simple\_send} sample offers this sequence and a simpler one in which every reply is posted with its own send call and completion notification; our first version used the simpler one, and Section~\ref{sec:load} reports both, because the choice sets the server's capacity.
\end{itemize}
Three rules make the rings safe without locks. First, the NIC recycles its receive buffer at line rate regardless of how far the workers are behind, so the receiver copies each request before publishing it. Second, a worker acknowledges its ticket in the slot after its reply is published, and the receiver reuses a slot only when the previous ticket of that slot has been acknowledged; otherwise the request is dropped and counted, which bounds the backlog at $C=4096$ requests. Third, reply slots are allocated after the solve rather than at arrival, so a long request does not delay the replies of shorter requests that finish after it. Producers write an item and then its sequence number behind a memory fence, and consumers read requests and models with L2-coherent loads (\code{\_\_ldcg}), because the data were written by another SM. Model fragments are tracked by index in a per-generation bitmap, so a retransmitted fragment is ignored and a retransmission after completion re-sends the acknowledgement.

Host monitoring and GPU occupancy affect latency. The host must be able to observe the server (counters, shutdown) without touching the data path: the request-service counters live in managed memory and are read once per second, and the shutdown flag lives in host-mapped memory. In a first version every spinning block read that flag on every iteration; 372 blocks issuing PCIe reads in a tight loop starved the NIC's own DMA and doorbells and raised the round-trip time to milliseconds. Each block now polls the flag no more than once every 0.5~ms, reducing PCIe traffic. Four block slots are left free for driver work.

\subsection{Where the time goes}
The round-trip time seen by a client decomposes as
\begin{equation}
T = t_{\mathrm{wire}} + t_{\mathrm{in}} + t_{q} + t_{\mathrm{solve}} + t_{\mathrm{out}}.
\label{eq:decomp}
\end{equation}
For the launch path $t_{\mathrm{in}}$ contains the NIC's delivery to the operating-system kernel, the socket receive, a pinned copy, the host-to-device copy and the launch, and $t_{\mathrm{out}}$ the device-to-host copy, the completion poll and the socket send. For the proxy path they contain the operating-system receive path, the socket calls and the PCIe round trips of the mapped rings. For the GPU-initiated path they reduce to a direct memory access (DMA) write, the receive poll, a ring hand-off through L2 and a doorbell. Every server stamps its residence time into the reply. The GPU-initiated server stamps $t_q$ (receive to claim), $t_{\mathrm{solve}}$ and the transmit wait with the GPU global timer; the two baselines stamp the interval from the return of \code{recv()} to the call of \code{sendto()} with the CPU clock and report it as $t_{\mathrm{solve}}$ plus a host term. The remainder, $T$ minus recorded server residence, includes wire transit and the client's network processing. Its server-side contribution follows the timestamp boundaries: NIC-to-GPU DMA and the receive-call wait for GPUNetIO, or kernel processing before \code{recv()} returns and after \code{sendto()} is called for the baselines. Section~\ref{sec:latency} uses this decomposition to interpret the end-to-end measurements.

\section{Experimental methodology}\label{sec:method}
\subsection{Platform}
The server is an x86-64 workstation with eight NVIDIA RTX PRO 6000 Blackwell Workstation Edition GPUs, of which one is used, and a ConnectX-7 NIC; the client is a dual-socket GH200 system, whose GPUs are idle in the request-service campaign and one of which is the generator of the streaming mode (Section~\ref{sec:stream}); the two are connected back to back by 400~GbE (Table~\ref{tab:config}). The server GPU is attached with a 256~MiB BAR1 window (resizable BAR disabled), which is why the receive ring is 64~MiB. The client and the baseline servers pin their network threads to dedicated cores. The CPU-mediated servers are measured in two host states. In the \emph{default} state interrupt moderation, processor idle states and the interrupt affinity of the NICs are left at the distribution's defaults. In the \emph{tuned} state a script of the repository disables every processor idle state deeper than C0 on all cores, selects the performance governor, turns interrupt moderation, GRO and LRO off and pins the NIC's interrupts next to the polling core, on the server and on the client alike. The GPU-initiated server is reported with the tuned client. In a preliminary comparison using per-reply sends, its reference-request round trip was 168.5~$\mu$s in the default state and 168.6~$\mu$s in the tuned state, so we report it in one host state.

\begin{table}[!t]\centering\small
\caption{Configuration of the measurement campaign.}
\label{tab:config}
\begin{tabularx}{\columnwidth}{l>{\raggedright\arraybackslash}X}\toprule
Item & Value \\ \midrule
Server GPU & RTX PRO 6000 Blackwell Workstation Edition (\code{sm\_120}, 188 SMs), driver 595.58.03 (reports CUDA 13.2 support) \\
Server NIC & ConnectX-7, firmware 28.48.1000, MLNX\_OFED 26.04-0.8.6, DOCA 3.4.0112 \\
Server OS & Linux 5.14.0-570.62.1.el9 (RHEL 9.6); baseline network thread pinned to one core \\
Client & GH200 (2$\times$ 72 Neoverse-V2 cores), Linux 7.0.0-1019-nvidia-64k, ConnectX-7 firmware 28.43.8002; TX and RX threads on two dedicated cores, \code{AF\_PACKET} \\
Link & 400~GbE direct attach, MTU 9000 (frames $\le$1472~B in the request service, 4096~B frames with 4032~B of payload in the streaming mode) \\
Streaming mode & generator on GH200 GPU~1: 4 send queues (1--8 in one sweep), 96 synthesis blocks by default; receiver: one receive queue of 16,384 slots of 4~KiB (64~MiB), one receive warp, 372 worker blocks of 512 threads; $B\in\{1,8\}$ \\
Worker blocks & 256 threads, 48~KiB shared memory, 2 per SM: 372 (GPUNetIO), 376 (proxy) \\
Launch server & 4 CUDA streams \\
Host states & default; tuned (no processor idle states, performance governor, no interrupt moderation, interrupts pinned) \\
Network runs & 100,000 requests after 1,000 warm-up; 291 runs \\
GPU-only runs & 500 requests per single-launch point; 200,000 (GPU rings) and 100,000 (CPU rings) per pipeline point \\
\bottomrule
\end{tabularx}
\end{table}

The server uses PCIe~5.0, and its topology places the GPU in NUMA node~1. GPU-to-NIC paths are reported as \code{SYS} for \code{mlx5\_0}--\code{mlx5\_3} and \code{NODE} for \code{mlx5\_4}--\code{mlx5\_7}: the former traverse the inter-socket interconnect, and the latter traverse host bridges within the same NUMA node. All reported paths therefore include host-side interconnects. NVIDIA's GPUDirect RDMA guidance identifies a path confined to PCIe switches as the preferred topology \cite{gpudirect}. GPU/NIC placement and PCIe switching are consequently important configuration choices for scaling the streaming receiver.

\subsection{Workloads}
Three instance families cover the three formats: dense $\pm1$ Sherrington--Kirkpatrick couplings (\code{PM1}), sparse ternary couplings at 5\% density (\code{TERNARY}), and the int8 portfolio model of Section~\ref{sec:protocol} with a four-factor covariance, a cardinality target of $n/10$ and a pool of 256 field vectors generated from AR(1) expected returns and reused cyclically (\code{INT8}). The GPU-only sweep covers $n\in\{64,128,192,256,512\}$, $B\in\{1,8,32\}$ and $K\in\{50,100,200,400\}$ where Eq.~\eqref{eq:smem} admits them, 120 configurations in all. The network sweep uses a scaling grid of 19 configurations: \code{INT8} with $n\in\{64,128,192\}$ and $K\in\{50,100,200,400\}$ at $B=8$; \code{PM1} with $n\in\{64,128,256,512\}$ at $B=8$, $K=100$; \code{TERNARY} at $n=256$; and \code{INT8} $n=128$ with $B\in\{1,32\}$. The reference configuration for the load and distribution experiments is \code{INT8}, $n=128$, $B=8$, $K=100$. Every request has its own seed. The supplied request sweep uses the client defaults $\Delta t=1.0$ and base seed 1, with model seed $1000+m$ for model index $m$ and request seed $7919+i$ for request index $i$. The model's default coupling gain is used. With these settings the archived sweep is deterministic and can be reproduced.

\subsection{Measurements}
\emph{GPU only.} \code{dsb\_bench} measures, without a NIC, (i) the in-kernel solve time of one request from the GPU global timer, (ii) the launch-to-completion time from CUDA events and (iii) the host round trip H2D + launch + D2H from the CPU clock; it also saturates the persistent pipeline with a GPU generator warp and a GPU consumer warp that use the same rings as the GPU-initiated server, and the proxied pipeline with a CPU producer and consumer that, like the proxy server, drain all ready replies in each iteration and keep up to 4096 requests outstanding.
\emph{Network.} \code{dsb\_client} measures the interval from the client timestamp immediately before request-frame construction and \code{sendto} to the timestamp after \code{recv} in closed loop with 1, 4, 16 and 64 outstanding requests and in open loop with Poisson arrivals at 10 to 800~kreq/s. The client sends from one thread; at 700 and 800~kreq/s offered with the tuned client its own send timestamps show that it emitted 596--681~kreq/s, which is therefore the highest load of the campaign. The first 1000 requests of every run are discarded. We report the 50th, 99th and 99.9th percentiles; the servers' own timestamps split each round trip into server residence and the remainder of Section~\ref{sec:server}. By construction, the launch and proxy servers each occupy one busy-polling CPU core on the data path and the GPU-initiated server none. Checks are sampled by request index: energy is recomputed from returned spins for up to the first 3000 requests of each run (including warm-up), and a CPU integration checks the final result for the first 20. 

\subsection{Baseline scope}
All three servers share the solver, workloads, protocol, and client. Both baselines use the kernel socket path, with tuning of processor idle states, interrupts, and offloads. The comparison measures the combined effects of the network stack and CPU--GPU coordination (Section~\ref{sec:scope}).

\subsection{Reporting policy}
The archive contains 291 network runs covering 246 distinct settings. For each server, workload, arrival rate, and window, the analysis uses the latest run; the unloaded comparison uses 19 settings per server configuration. For GPUNetIO and the tuned proxy at 400--800~kreq/s offered, we report medians across repeated runs with the current code. Table~\ref{tab:repeats} gives the run counts and ranges. Latency percentiles are computed within each run before aggregation.

The configured Poisson rate is termed ``offered load,'' and the measured completion rate is termed ``achieved rate.'' At the two highest target rates, the client injects 596--681~kreq/s. Latency percentiles are computed over completed requests, and loss is reported separately.

\section{Results}\label{sec:results}
The repository includes the benchmark scripts, analysis scripts, and per-request records used to produce these results.

\subsection{Correctness}
The self-test compares the GPU with the CPU reference on eleven configurations across the three formats, through both the single-launch and persistent paths. All eleven configurations and 44 replies through the persistent ring passed before the measurement campaign. Across the 291 network runs, the client recomputed 850,874 reply energies and compared 5,820 final results with CPU-reference integrations, with no mismatch on any server.

\subsection{Cost of one request on the GPU}\label{sec:gpucost}
Table~\ref{tab:single} and Fig.~\ref{fig:overhead} separate the in-kernel solve time from the cost of the launch path without the network. Over the 120 configurations the solve takes 31~$\mu$s to 1.9~ms and grows linearly with $K$. Launching the block costs a further 3.7--7.2~$\mu$s (median 6.0) and the full host round trip of host-to-device (H2D) copy, launch, and device-to-host (D2H) copy costs 10.0--19.3~$\mu$s above the solve (median 13.0), across the tested requests. Relative to the solve, this overhead adds 60\% to the smallest request (31~$\mu$s), 8\% to the reference portfolio request (152~$\mu$s), and 1\% to a 1-ms request. These measurements use an idle host with a polling thread and exclude network I/O; Section~\ref{sec:latency} reports the end-to-end latency.

\begin{figure}[!t]\centering
\includegraphics[width=\columnwidth]{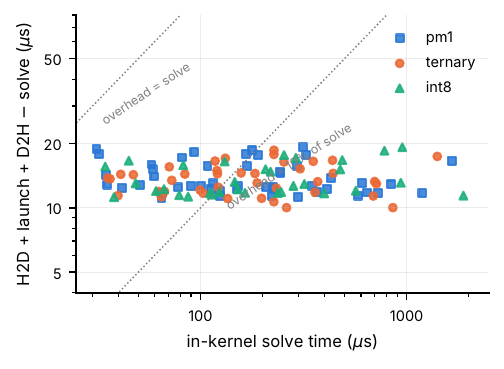}
\caption{Launch-path overhead (H2D + launch + D2H minus the in-kernel solve) against the solve time, one point per configuration, no network. The overhead is 10--19~$\mu$s for every configuration.}
\label{fig:overhead}
\end{figure}

\begin{table*}[tp]\centering\footnotesize
\caption{One request on the server GPU, no network (median over 500 requests). All times are in $\mu$s. The last two columns are the launch-to-completion time and the H2D + launch + D2H round trip, each minus the in-kernel solve, as the median over $K$ (range in parentheses).}
\label{tab:single}
\begin{tabular}{lrrrrrrrr}\toprule
 & & & \multicolumn{4}{c}{in-kernel solve, p50 ($\mu$s)} & launch & H2D+launch+D2H \\
\cmidrule(lr){4-7}
format & $n$ & $B$ & $K{=}50$ & 100 & 200 & 400 & $-$solve & $-$solve ($\mu$s) \\ \midrule
pm1 & 64 & 1 & 31.3 & 58.6 & 112.8 & 221.1 & 6.0 & 13.7 (11.4--18.8) \\
pm1 & 64 & 8 & 41.7 & 77.7 & 149.6 & 293.2 & 5.8 & 12.4 (12.2--15.8) \\
pm1 & 64 & 32 & 89.9 & 168.7 & 326.5 & 641.5 & 6.1 & 14.2 (11.8--17.6) \\
pm1 & 128 & 1 & 32.2 & 59.8 & 114.9 & 225.1 & 6.1 & 13.5 (12.5--17.9) \\
pm1 & 128 & 8 & 57.9 & 108.1 & 208.7 & 409.7 & 5.7 & 14.1 (12.2--16.0) \\
pm1 & 128 & 32 & 166.6 & 313.3 & 606.1 & 1192.5 & 6.3 & 15.5 (11.8--19.3) \\
pm1 & 192 & 1 & 34.9 & 64.7 & 124.3 & 243.4 & 6.0 & 13.1 (11.1--14.7) \\
pm1 & 192 & 8 & 81.2 & 152.9 & 296.2 & 582.6 & 6.1 & 12.0 (11.3--17.2) \\
pm1 & 256 & 1 & 35.4 & 65.2 & 124.9 & 244.4 & 6.5 & 12.3 (11.4--14.5) \\
pm1 & 256 & 8 & 100.8 & 190.3 & 369.1 & 726.7 & 6.6 & 12.3 (11.7--17.7) \\
pm1 & 512 & 1 & 50.9 & 93.4 & 178.1 & 347.8 & 5.6 & 15.5 (12.7--18.6) \\
pm1 & 512 & 8 & 224.4 & 428.7 & 837.2 & 1654.7 & 6.0 & 13.3 (12.2--16.5) \\
ternary & 64 & 1 & 35.5 & 62.8 & 117.7 & 227.2 & 5.7 & 15.2 (11.9--18.6) \\
ternary & 64 & 8 & 47.3 & 84.0 & 157.5 & 304.4 & 5.7 & 14.5 (14.3--15.2) \\
ternary & 64 & 32 & 103.0 & 188.3 & 358.9 & 699.5 & 5.9 & 12.5 (11.7--13.3) \\
ternary & 128 & 1 & 36.4 & 64.6 & 120.9 & 233.5 & 6.0 & 13.0 (11.3--14.4) \\
ternary & 128 & 8 & 67.0 & 120.3 & 227.1 & 440.3 & 5.7 & 14.7 (11.9--17.8) \\
ternary & 128 & 32 & 197.9 & 370.4 & 713.8 & 1408.1 & 6.4 & 13.1 (11.1--17.4) \\
ternary & 192 & 1 & 39.8 & 70.6 & 132.0 & 254.1 & 6.0 & 16.0 (11.4--17.1) \\
ternary & 192 & 8 & 99.6 & 183.8 & 352.2 & 689.2 & 6.0 & 13.3 (11.4--16.5) \\
ternary & 256 & 1 & 41.1 & 72.6 & 135.5 & 261.5 & 6.5 & 12.2 (10.0--14.3) \\
ternary & 256 & 8 & 121.9 & 227.0 & 437.1 & 857.4 & 6.0 & 11.6 (10.0--16.7) \\
int8 & 64 & 1 & 34.5 & 60.9 & 113.5 & 218.8 & 6.2 & 13.4 (11.5--15.6) \\
int8 & 64 & 8 & 48.5 & 87.1 & 164.4 & 318.9 & 6.0 & 12.3 (11.3--13.0) \\
int8 & 64 & 32 & 111.3 & 206.8 & 398.0 & 780.3 & 6.2 & 13.4 (11.7--18.5) \\
int8 & 128 & 1 & 38.1 & 66.5 & 123.4 & 237.2 & 6.2 & 11.8 (11.2--12.2) \\
int8 & 128 & 8 & 82.5 & 151.7 & 290.1 & 566.9 & 5.8 & 14.0 (12.0--17.1) \\
int8 & 128 & 32 & 254.3 & 487.1 & 953.2 & 1885.1 & 6.2 & 17.2 (11.4--19.2) \\
int8 & 192 & 1 & 44.9 & 78.8 & 146.7 & 282.3 & 6.0 & 12.9 (11.5--16.6) \\
int8 & 192 & 8 & 131.1 & 246.2 & 476.4 & 936.9 & 5.8 & 14.1 (11.9--16.5) \\
\bottomrule\end{tabular}

\end{table*}

\subsection{Pipeline capacity without the network}
Table~\ref{tab:pipeline} compares the saturated throughput of the GPU-resident rings, fed by a GPU generator warp, with that of the host-mapped rings fed by a CPU thread. For every configuration whose solve is short enough, the GPU-resident pipeline sustains 1.9--2.0~Mreq/s, the rate at which one generator lane can publish items; 62 of the 120 configurations reach this ceiling. For longer solves the workers become the limit: the throughput reaches 73--96\% of $N_w/t_{\mathrm{solve}}$, the ideal rate for the GPU-only benchmark's $N_w=375$ workers, the shortfall being the slowdown of two blocks sharing an SM. The CPU-proxied pipeline sustains 165--1025~kreq/s. For solves below 150~$\mu$s it spends 1.0--1.6~$\mu$s of CPU time per request, and the producer and consumer loops are the limit; for the reference configuration the ratio is 1983 to 658~kreq/s, 3.0$\times$. For the longest solves both pipelines approach the worker bound. The GPU-only benchmark reserves one of the 376 resident block slots for its generator/consumer control work, leaving 375 solver workers; the GPUNetIO server reserves four slots for communication and uses 372.

\begin{table*}[tp]\centering\footnotesize
\caption{Saturated persistent pipelines without the network, $B=8$, default host state. The GPU-only benchmark uses $N_w=375$ workers (the networked GPUNetIO server uses 372). $N_w/t_{\mathrm{solve}}$ is the bound at the single-request solve time. The CPU proxy drains all ready replies in each iteration and keeps up to 4096 requests outstanding.}
\label{tab:pipeline}
\begin{tabular}{lrrrrrrrr}\toprule
 & & & & \multicolumn{3}{c}{GPU-resident rings (GPU producer/consumer)} & \multicolumn{2}{c}{host-mapped rings (CPU proxy)} \\
\cmidrule(lr){5-7}\cmidrule(lr){8-9}
format & $n$ & $B$ & $K$ & kreq/s & \shortstack{$N_w/t_{\mathrm{solve}}$\\(kreq/s)} & p99 ($\mu$s) & kreq/s & p99 ($\mu$s) \\ \midrule
pm1 & 64 & 8 & 100 & 1990 & 4826 & 543.0 & 769 & 5489.9 \\
pm1 & 64 & 8 & 400 & 1224 & 1279 & 3347.7 & 595 & 7378.3 \\
pm1 & 128 & 8 & 100 & 1981 & 3468 & 577.3 & 721 & 6127.4 \\
pm1 & 128 & 8 & 400 & 868 & 915 & 4735.8 & 558 & 8124.5 \\
pm1 & 192 & 8 & 100 & 1974 & 2453 & 621.0 & 658 & 6710.1 \\
pm1 & 192 & 8 & 400 & 599 & 644 & 6889.4 & 472 & 9394.0 \\
pm1 & 256 & 8 & 100 & 1841 & 1971 & 2230.7 & 623 & 7078.5 \\
pm1 & 256 & 8 & 400 & 483 & 516 & 8547.3 & 418 & 10966.7 \\
pm1 & 512 & 8 & 100 & 823 & 875 & 5014.1 & 459 & 9553.2 \\
pm1 & 512 & 8 & 400 & 211 & 227 & 19706.2 & 211 & 19639.4 \\
ternary & 64 & 8 & 100 & 2030 & 4464 & 585.9 & 1025 & 4283.5 \\
ternary & 64 & 8 & 400 & 1154 & 1232 & 3557.6 & 560 & 7814.1 \\
ternary & 128 & 8 & 100 & 1974 & 3117 & 593.8 & 674 & 6548.8 \\
ternary & 128 & 8 & 400 & 808 & 852 & 5087.7 & 546 & 8717.9 \\
ternary & 192 & 8 & 100 & 1891 & 2040 & 2165.2 & 624 & 7059.2 \\
ternary & 192 & 8 & 400 & 497 & 544 & 8342.2 & 418 & 11325.1 \\
ternary & 256 & 8 & 100 & 1502 & 1652 & 2754.8 & 578 & 7621.1 \\
ternary & 256 & 8 & 400 & 394 & 437 & 10471.1 & 353 & 13794.6 \\
int8 & 64 & 8 & 100 & 2000 & 4303 & 569.1 & 997 & 4309.6 \\
int8 & 64 & 8 & 400 & 1007 & 1176 & 4082.3 & 583 & 7512.1 \\
int8 & 128 & 8 & 100 & 1983 & 2471 & 667.9 & 658 & 6732.1 \\
int8 & 128 & 8 & 400 & 536 & 661 & 7696.7 & 446 & 10090.5 \\
int8 & 192 & 8 & 100 & 1139 & 1523 & 3685.3 & 517 & 8517.0 \\
int8 & 192 & 8 & 400 & 291 & 400 & 14531.4 & 289 & 14467.6 \\
\bottomrule\end{tabular}

\end{table*}

\subsection{End-to-end latency}\label{sec:latency}
Table~\ref{tab:net} reports unloaded round-trip time (RTT) for the five server configurations. For the reference request, the GPUNetIO median is 166.4~$\mu$s, compared with 175.6~$\mu$s for tuned proxy and 178.5~$\mu$s for tuned launch. The improvement over the faster tuned baseline is approximately 5\%. The recorded solve times are nearly identical (149.9--152.9~$\mu$s), so the difference comes from the request path around the solver.

Table~\ref{tab:breakdown} reports component percentiles; Fig.~\ref{fig:breakdown} uses component means for an additive decomposition. On GPUNetIO, the median work-ring and transmit waits are 2.0 and 0.8~$\mu$s. The baselines' median host terms are 11.3--16.1~$\mu$s, which is roughly what the GPU-initiated path saves in server residence.

\begin{table*}[tp]\centering\footnotesize
\caption{Unloaded round-trip time in $\mu$s (closed loop, one outstanding request, 100,000 requests per run), measured on the client. ``Default'' and ``tuned'' are the two host states of Section~\ref{sec:method}; the GPU-initiated server uses the batched send path. Bold: the lowest median of the row.}
\label{tab:net}
\begin{tabular}{lrrrrrrrrrrrrr}\toprule
 & & & & \multicolumn{2}{c}{launch, default} & \multicolumn{2}{c}{proxy, default} & \multicolumn{2}{c}{launch, tuned} & \multicolumn{2}{c}{proxy, tuned} & \multicolumn{2}{c}{GPUNetIO} \\
\cmidrule(lr){5-6}\cmidrule(lr){7-8}\cmidrule(lr){9-10}\cmidrule(lr){11-12}\cmidrule(lr){13-14}
format & $n$ & $B$ & $K$ & p50 & p99 & p50 & p99 & p50 & p99 & p50 & p99 & p50 & p99 \\ \midrule
int8 & 64 & 8 & 50 & 71.7 & 78.3 & 73.9 & 80.8 & 67.7 & 70.6 & 70.9 & 75.6 & \textbf{62.1} & 65.1 \\
int8 & 64 & 8 & 100 & 110.4 & 121.3 & 113.5 & 208.2 & 106.6 & 108.9 & 110.0 & 114.5 & \textbf{101.6} & 103.5 \\
int8 & 64 & 8 & 200 & 283.1 & 295.0 & 278.4 & 292.8 & 182.9 & 192.1 & 188.5 & 192.9 & \textbf{181.4} & 183.5 \\
int8 & 64 & 8 & 400 & 422.0 & 442.1 & 436.8 & 449.9 & \textbf{338.3} & 341.1 & 345.5 & 350.0 & 340.8 & 343.8 \\
int8 & 128 & 8 & 50 & 109.9 & 115.3 & 108.6 & 114.9 & 104.5 & 107.4 & 104.7 & 109.1 & \textbf{95.9} & 98.1 \\
int8 & 128 & 8 & 100 & 183.2 & 188.4 & 264.8 & 280.6 & 178.5 & 180.8 & 175.6 & 180.3 & \textbf{166.4} & 168.2 \\
int8 & 128 & 8 & 200 & 407.0 & 418.9 & 406.5 & 420.9 & 313.8 & 316.4 & 314.4 & 319.3 & \textbf{307.1} & 310.4 \\
int8 & 128 & 8 & 400 & 680.5 & 692.2 & 685.6 & 700.7 & \textbf{586.5} & 590.3 & 595.6 & 600.5 & 588.0 & 590.6 \\
int8 & 192 & 8 & 50 & 245.0 & 259.4 & 245.2 & 260.4 & 154.2 & 157.6 & 155.2 & 160.1 & \textbf{146.4} & 149.3 \\
int8 & 192 & 8 & 100 & 357.6 & 371.5 & 364.7 & 377.0 & 266.0 & 270.0 & 271.5 & 276.4 & \textbf{262.8} & 265.7 \\
int8 & 192 & 8 & 200 & 591.0 & 603.0 & 595.8 & 610.1 & \textbf{498.4} & 502.2 & 505.0 & 509.8 & 500.2 & 504.2 \\
int8 & 192 & 8 & 400 & 1051.3 & 1067.6 & 1063.3 & 1077.1 & \textbf{958.1} & 967.9 & 971.1 & 976.3 & 959.8 & 964.4 \\
int8 & 128 & 1 & 100 & 96.9 & 102.2 & 94.1 & 100.2 & 92.6 & 95.8 & 90.2 & 94.6 & \textbf{82.9} & 86.8 \\
int8 & 128 & 32 & 100 & 601.7 & 615.4 & 596.2 & 626.4 & 510.5 & 513.4 & 507.0 & 529.3 & \textbf{500.4} & 525.9 \\
pm1 & 64 & 8 & 100 & 108.3 & 174.2 & 103.2 & 109.4 & 104.4 & 107.2 & 100.0 & 104.3 & \textbf{92.1} & 94.7 \\
pm1 & 128 & 8 & 100 & 217.1 & 234.3 & 220.9 & 236.8 & 129.4 & 132.5 & 130.6 & 135.0 & \textbf{124.6} & 130.1 \\
pm1 & 256 & 8 & 100 & 306.0 & 320.3 & 304.6 & 320.6 & 215.7 & 219.0 & 215.3 & 220.1 & \textbf{206.8} & 210.5 \\
pm1 & 512 & 8 & 100 & 541.1 & 555.8 & 527.8 & 550.2 & 450.9 & 453.9 & 437.0 & 451.5 & \textbf{431.5} & 447.4 \\
ternary & 256 & 8 & 100 & 345.6 & 358.6 & 346.5 & 363.1 & 252.7 & 256.2 & 256.3 & 263.3 & \textbf{249.6} & 255.5 \\
\bottomrule\end{tabular}

\end{table*}

\begin{table*}[tp]\centering\footnotesize
\caption{Decomposition of the unloaded round trip for the reference configuration (int8, $n=128$, $B=8$, $K=100$), in $\mu$s, from the servers' timestamps. ``Server I/O'' is the CPU time between \code{recv()} and \code{sendto()} minus the solve for the baselines, and the transmit wait for the GPU-initiated server. ``Residual'' is RTT minus the recorded server residence; its timing boundaries differ by server and it is not a direct wire-latency measurement.}
\label{tab:breakdown}
\begin{tabular}{lrrrrrrrrrr}\toprule
 & \multicolumn{2}{c}{queue} & \multicolumn{2}{c}{solve} & \multicolumn{2}{c}{server I/O} & \multicolumn{2}{c}{residual} & \multicolumn{2}{c}{RTT} \\
\cmidrule(lr){2-3}\cmidrule(lr){4-5}\cmidrule(lr){6-7}\cmidrule(lr){8-9}\cmidrule(lr){10-11}
server & p50 & p99 & p50 & p99 & p50 & p99 & p50 & p99 & p50 & p99 \\ \midrule
launch, default & 0.0 & 0.0 & 151.7 & 152.0 & 16.1 & 17.0 & 15.5 & 20.5 & 183.2 & 188.4 \\
proxy, default & 0.0 & 0.0 & 152.6 & 155.0 & 11.3 & 14.8 & 101.1 & 115.9 & 264.8 & 280.6 \\
launch, tuned & 0.0 & 0.0 & 151.7 & 152.0 & 16.1 & 17.0 & 10.7 & 12.6 & 178.5 & 180.8 \\
proxy, tuned & 0.0 & 0.0 & 152.9 & 155.4 & 12.0 & 15.5 & 10.5 & 12.5 & 175.6 & 180.3 \\
GPUNetIO & 2.0 & 2.1 & 149.9 & 151.1 & 0.8 & 2.0 & 13.5 & 15.7 & 166.4 & 168.2 \\
\bottomrule\end{tabular}

\end{table*}

\begin{figure}[!t]\centering
\includegraphics[width=\columnwidth]{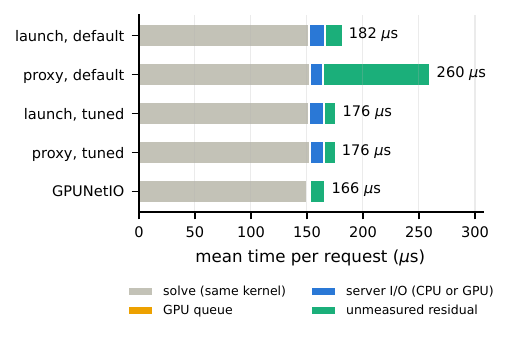}
\caption{Mean per-request timing components for the reference request. Means are used because they add to the mean RTT; component medians do not generally sum to the median RTT. The residual includes uninstrumented communication and endpoint work.}
\label{fig:breakdown}
\end{figure}

\begin{figure*}[tp]\centering
\includegraphics[width=\textwidth]{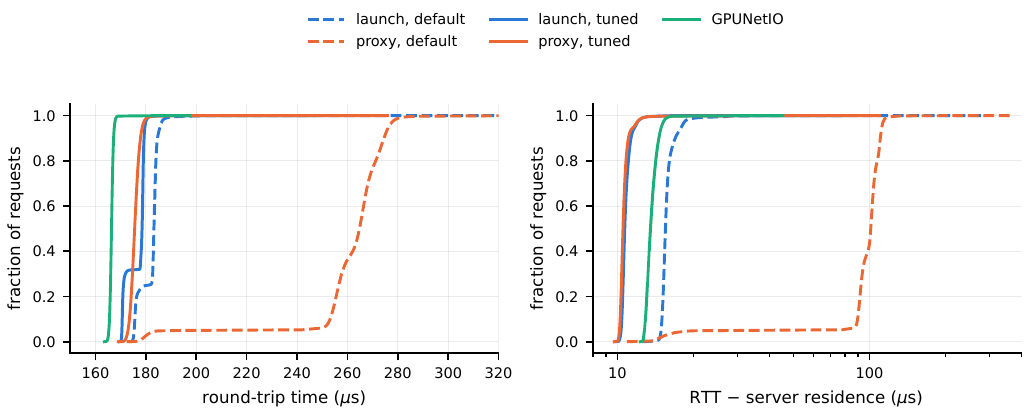}
\caption{Distribution of the unloaded round-trip time (left) and of the round trip minus the server's residence (right) for the reference configuration, 100,000 requests per server. On the default host the proxy's residual lies in its higher-latency mode for 95\% of the requests of this run; on the tuned host both CPU-mediated servers have a single mode near 10.5~$\mu$s. The GPU-initiated distribution is shown for comparison.}
\label{fig:cdf}
\end{figure*}

\subsubsection{Host-state dependence}
Table~\ref{tab:decomp} separates recorded residence from the residual RTT. The default-host CPU-mediated runs exhibit fast and slow residual modes, with the latter contributing roughly 100~$\mu$s. Tuning idle states, interrupt moderation, affinity, and offloads brings the residual close to 11~$\mu$s throughout the grid. The large default-host gap therefore reflects the host configuration, and we use the tuned servers for the comparison.

The GPU-initiated receive--solve--reply path bypasses the host network stack and its scheduling, and its unloaded RTT barely changed between the default and tuned host states (Section~\ref{sec:method}). The client is still CPU-based, so the load experiments below keep its receive path tuned.

\subsubsection{Comparison with tuned baselines}
The GPU-initiated service has the lowest recorded median in 15 of 19 settings, improving on the faster tuned baseline by 1.5--9.2~$\mu$s; the remaining four settings differ by 1.5--2.5~$\mu$s in favor of a baseline. Its recorded server residence is lower throughout the grid. For the reference request, the GPUNetIO residual is 13.5~$\mu$s, approximately 3~$\mu$s above the tuned baselines; the residuals use different timestamp boundaries (Section~\ref{sec:server}).

\begin{table*}[tp]\centering\footnotesize
\caption{Server residence (as stamped by the server) and the rest of the round trip (RTT minus residence), medians in $\mu$s, closed loop with one outstanding request.}
\label{tab:decomp}
\begin{tabular}{lrrrrrrrrrrrrr}\toprule
 & & & & \multicolumn{2}{c}{launch, default} & \multicolumn{2}{c}{proxy, default} & \multicolumn{2}{c}{launch, tuned} & \multicolumn{2}{c}{proxy, tuned} & \multicolumn{2}{c}{GPUNetIO} \\
\cmidrule(lr){5-6}\cmidrule(lr){7-8}\cmidrule(lr){9-10}\cmidrule(lr){11-12}\cmidrule(lr){13-14}
format & $n$ & $B$ & $K$ & server & rest & server & rest & server & rest & server & rest & server & rest \\ \midrule
int8 & 64 & 8 & 50 & 57.6 & 14.1 & 59.7 & 14.1 & 57.1 & 10.5 & 60.2 & 10.6 & 48.7 & 13.3 \\
int8 & 64 & 8 & 100 & 96.4 & 14.0 & 99.0 & 14.0 & 96.0 & 10.5 & 99.4 & 10.6 & 88.3 & 13.3 \\
int8 & 64 & 8 & 200 & 180.3 & 103.2 & 177.4 & 100.8 & 172.3 & 10.4 & 177.9 & 10.5 & 167.7 & 13.6 \\
int8 & 64 & 8 & 400 & 328.1 & 93.5 & 334.4 & 102.8 & 327.6 & 10.6 & 334.9 & 10.5 & 326.9 & 13.7 \\
int8 & 128 & 8 & 50 & 94.3 & 15.5 & 94.0 & 14.4 & 94.0 & 10.5 & 94.0 & 10.7 & 82.9 & 13.0 \\
int8 & 128 & 8 & 100 & 167.8 & 15.5 & 163.9 & 101.1 & 167.8 & 10.7 & 164.9 & 10.5 & 152.7 & 13.5 \\
int8 & 128 & 8 & 200 & 303.2 & 103.8 & 304.6 & 102.3 & 303.1 & 10.6 & 303.4 & 10.8 & 293.5 & 13.5 \\
int8 & 128 & 8 & 400 & 576.0 & 104.1 & 584.4 & 101.4 & 575.8 & 10.7 & 584.8 & 10.7 & 574.7 & 13.3 \\
int8 & 192 & 8 & 50 & 143.5 & 101.5 & 144.1 & 101.2 & 143.3 & 10.9 & 144.1 & 10.9 & 132.8 & 13.5 \\
int8 & 192 & 8 & 100 & 256.1 & 101.5 & 260.5 & 104.5 & 254.4 & 11.0 & 260.5 & 10.9 & 249.4 & 13.4 \\
int8 & 192 & 8 & 200 & 487.6 & 103.5 & 493.7 & 102.0 & 487.4 & 11.0 & 493.9 & 11.0 & 483.7 & 16.4 \\
int8 & 192 & 8 & 400 & 953.8 & 99.5 & 960.0 & 103.5 & 946.6 & 11.0 & 959.9 & 11.1 & 946.0 & 13.7 \\
int8 & 128 & 1 & 100 & 82.0 & 14.9 & 79.3 & 14.6 & 81.9 & 10.7 & 79.5 & 10.6 & 68.8 & 14.0 \\
int8 & 128 & 32 & 100 & 499.8 & 101.9 & 495.1 & 102.7 & 499.7 & 10.7 & 496.2 & 10.7 & 485.8 & 13.7 \\
pm1 & 64 & 8 & 100 & 94.3 & 14.0 & 89.1 & 14.0 & 93.9 & 10.5 & 89.4 & 10.5 & 78.7 & 13.3 \\
pm1 & 128 & 8 & 100 & 118.9 & 98.1 & 120.2 & 101.2 & 118.8 & 10.6 & 119.8 & 10.7 & 109.1 & 15.3 \\
pm1 & 256 & 8 & 100 & 204.9 & 101.1 & 204.3 & 100.8 & 204.7 & 10.9 & 204.2 & 11.0 & 193.0 & 13.7 \\
pm1 & 512 & 8 & 100 & 440.0 & 101.7 & 426.3 & 102.2 & 439.9 & 11.0 & 426.0 & 10.8 & 417.9 & 13.4 \\
ternary & 256 & 8 & 100 & 241.8 & 103.8 & 245.2 & 101.7 & 241.6 & 11.1 & 245.2 & 10.9 & 235.2 & 13.9 \\
\bottomrule\end{tabular}

\end{table*}

\subsubsection{Interpreting the timing decomposition}
Relative to GPUNetIO, the tuned launch and proxy servers have median residence differences of 9.6 and 10.7~$\mu$s across the 19 settings. These differences combine dispatch, copies or mapped-memory access, and the different timestamp boundaries, and they are close to the launch-path overhead measured without the network (Section~\ref{sec:gpucost}).

\subsection{Scaling with solve cost}
Fig.~\ref{fig:scaling} and Table~\ref{tab:ratio} follow median RTT as the solve grows, through $K$ at fixed $n$ and through $n$ at fixed $K$. With the tuned hosts, the CPU-mediated-to-GPUNetIO ratios remain near unity, spanning approximately 0.99--1.14. Dispatch savings matter most for short requests; as the common solver occupies more of the round trip, the ratios approach one. The larger separation under default host settings follows the slow receive mode discussed above. Without load, the main gains are therefore the freed server CPU core and lower host sensitivity; the larger latency difference appears under concurrent load.

\begin{figure*}[tp]\centering
\includegraphics[width=\textwidth]{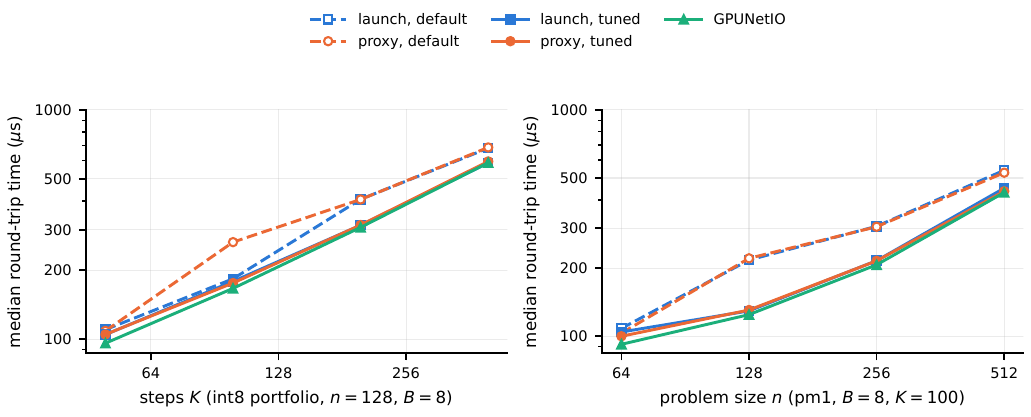}
\caption{Median unloaded round-trip time versus integration steps $K$ (left: \code{INT8}, $n=128$, $B=8$) and problem size $n$ (right: \code{PM1}, $B=8$, $K=100$). Dashed lines and open markers denote default-host CPU-mediated servers; solid lines and filled markers denote tuned CPU-mediated servers and GPUNetIO. Both axes use logarithmic scales.}
\label{fig:scaling}
\end{figure*}

\begin{table*}[tp]\centering\footnotesize
\caption{Median round-trip time of the CPU-mediated servers relative to the GPU-initiated server, on the default and on the tuned host, with the GPU-initiated server's residence as the measure of solve cost.}
\label{tab:ratio}
\begin{tabular}{lrrrrrrrr}\toprule
 & & & & GPUNetIO & \multicolumn{2}{c}{default host} & \multicolumn{2}{c}{tuned host} \\
\cmidrule(lr){6-7}\cmidrule(lr){8-9}
format & $n$ & $B$ & $K$ & server p50 ($\mu$s) & launch & proxy & launch & proxy \\ \midrule
int8 & 64 & 8 & 50 & 48.7 & 1.16 & 1.19 & 1.09 & 1.14 \\
int8 & 64 & 8 & 100 & 88.3 & 1.09 & 1.12 & 1.05 & 1.08 \\
int8 & 64 & 8 & 200 & 167.7 & 1.56 & 1.53 & 1.01 & 1.04 \\
int8 & 64 & 8 & 400 & 326.9 & 1.24 & 1.28 & 0.99 & 1.01 \\
int8 & 128 & 8 & 50 & 82.9 & 1.15 & 1.13 & 1.09 & 1.09 \\
int8 & 128 & 8 & 100 & 152.7 & 1.10 & 1.59 & 1.07 & 1.06 \\
int8 & 128 & 8 & 200 & 293.5 & 1.33 & 1.32 & 1.02 & 1.02 \\
int8 & 128 & 8 & 400 & 574.7 & 1.16 & 1.17 & 1.00 & 1.01 \\
int8 & 192 & 8 & 50 & 132.8 & 1.67 & 1.68 & 1.05 & 1.06 \\
int8 & 192 & 8 & 100 & 249.4 & 1.36 & 1.39 & 1.01 & 1.03 \\
int8 & 192 & 8 & 200 & 483.7 & 1.18 & 1.19 & 1.00 & 1.01 \\
int8 & 192 & 8 & 400 & 946.0 & 1.10 & 1.11 & 1.00 & 1.01 \\
int8 & 128 & 1 & 100 & 68.8 & 1.17 & 1.14 & 1.12 & 1.09 \\
int8 & 128 & 32 & 100 & 485.8 & 1.20 & 1.19 & 1.02 & 1.01 \\
pm1 & 64 & 8 & 100 & 78.7 & 1.18 & 1.12 & 1.13 & 1.09 \\
pm1 & 128 & 8 & 100 & 109.1 & 1.74 & 1.77 & 1.04 & 1.05 \\
pm1 & 256 & 8 & 100 & 193.0 & 1.48 & 1.47 & 1.04 & 1.04 \\
pm1 & 512 & 8 & 100 & 417.9 & 1.25 & 1.22 & 1.05 & 1.01 \\
ternary & 256 & 8 & 100 & 235.2 & 1.38 & 1.39 & 1.01 & 1.03 \\
\bottomrule\end{tabular}

\end{table*}

\subsection{Latency under load}\label{sec:load}
Tables~\ref{tab:window} and~\ref{tab:load} and Fig.~\ref{fig:load} report the reference configuration under closed-loop concurrency and open-loop arrivals. With the default client, the residual RTT increases under load across all servers while their residence times remain comparatively stable. We therefore use the tuned client below.

At 400--800~kreq/s offered, Table~\ref{tab:load} and Fig.~\ref{fig:load} report medians over repeated runs; Table~\ref{tab:repeats} lists the counts and ranges. At 800~kreq/s offered, GPUNetIO completes 618--628~kreq/s and the tuned proxy 620--668~kreq/s without loss; the client limits injection at this load. GPUNetIO's median RTT is then 189~$\mu$s against 288~$\mu$s for the tuned proxy, and its 99th-percentile RTT 218~$\mu$s against 609~$\mu$s, reductions by factors of 1.5 and 2.8. At 600~kreq/s the corresponding values are 188 against 240~$\mu$s and 216 against 327~$\mu$s; at 400~kreq/s the two are still close (178 against 193~$\mu$s and 207 against 229~$\mu$s). Between 400 and 800~kreq/s offered, GPUNetIO's median RTT rises by 6\% and the proxy's by 50\%. At the two highest rates, where the client is saturated, individual runs of both servers occasionally show millisecond-scale 99th percentiles; at 700~kreq/s this places the median 99th percentile for GPUNetIO (423.5~$\mu$s) slightly above that of the proxy (381~$\mu$s). The launch server reaches its four-stream limit much earlier, near 24~kreq/s.

The server timestamps help explain this behavior. From 400 to 800~kreq/s offered, GPUNetIO's median server residence stays at 156--159~$\mu$s and its 99th percentile at 180--186~$\mu$s, whereas the proxy's rise from 173 to 235~$\mu$s and from 208 to 317~$\mu$s. In closed loop, the same pipeline also supports a higher completion rate than the proxy at the largest tested window.

\begin{table*}[tp]\centering\footnotesize
\caption{Closed loop with 1--64 outstanding requests, reference configuration.}
\label{tab:window}
\begin{tabular}{lrrrrrrr}\toprule
server & outstanding & kreq/s & \multicolumn{2}{c}{server residence ($\mu$s)} & \multicolumn{3}{c}{client RTT ($\mu$s)} \\
\cmidrule(lr){4-5}\cmidrule(lr){6-8}
 & & & p50 & p99 & p50 & p99 & p99.9 \\ \midrule
launch, default & 1 & 5.5 & 167.8 & 168.8 & 183.2 & 188.4 & 197.6 \\
 & 4 & 21.4 & 161.3 & 169.8 & 178.6 & 278.7 & 324.1 \\
 & 16 & 24.0 & 162.6 & 166.5 & 667.0 & 765.0 & 772.6 \\
 & 64 & 24.1 & 162.4 & 166.4 & 2658.7 & 2760.1 & 2772.7 \\
\addlinespace
proxy, default & 1 & 3.8 & 163.9 & 168.2 & 264.8 & 280.6 & 300.2 \\
 & 4 & 20.4 & 163.4 & 172.0 & 178.9 & 293.5 & 322.3 \\
 & 16 & 66.6 & 167.4 & 198.2 & 237.1 & 349.5 & 440.6 \\
 & 64 & 160.0 & 175.4 & 223.6 & 416.1 & 548.1 & 619.3 \\
\addlinespace
launch, tuned & 1 & 5.7 & 167.8 & 168.7 & 178.5 & 180.8 & 184.4 \\
 & 4 & 22.7 & 162.0 & 170.2 & 175.9 & 184.8 & 190.2 \\
 & 16 & 24.4 & 162.0 & 165.6 & 654.3 & 661.9 & 665.7 \\
 & 64 & 24.4 & 162.0 & 165.7 & 2618.3 & 2629.5 & 2635.2 \\
\addlinespace
proxy, tuned & 1 & 5.7 & 164.9 & 169.3 & 175.6 & 180.3 & 186.5 \\
 & 4 & 22.9 & 163.2 & 168.2 & 174.0 & 180.0 & 194.0 \\
 & 16 & 91.3 & 162.5 & 182.8 & 174.2 & 194.6 & 204.3 \\
 & 64 & 339.1 & 168.9 & 198.1 & 185.7 & 215.7 & 223.9 \\
\addlinespace
GPUNetIO & 1 & 6.0 & 152.7 & 154.0 & 166.4 & 168.2 & 177.0 \\
 & 4 & 23.9 & 153.4 & 156.0 & 167.2 & 170.2 & 178.1 \\
 & 16 & 95.4 & 153.1 & 174.4 & 166.9 & 188.5 & 196.2 \\
 & 64 & 363.1 & 155.2 & 178.9 & 174.5 & 198.8 & 205.4 \\
\addlinespace
GPUNetIO, per-reply send & 1 & 5.9 & 152.9 & 154.3 & 168.6 & 172.4 & 178.2 \\
 & 4 & 23.6 & 153.3 & 154.7 & 169.2 & 173.7 & 187.3 \\
 & 16 & 94.2 & 152.6 & 171.7 & 169.1 & 188.6 & 199.0 \\
 & 64 & 326.5 & 165.8 & 195.7 & 191.4 & 221.5 & 268.6 \\
\addlinespace
\bottomrule\end{tabular}

\end{table*}

\begin{table*}[tp]\centering\footnotesize
\caption{Open loop with Poisson arrivals, reference configuration, 100,000 requests per run. ``Lost'' are requests still unanswered when the client completes its final receive drain, which waits up to 2~s after sending the last request. ``GPUNetIO, per-reply send'' is the same server with the per-reply send path of the vendor sample (and the 5-$\mu$s receive timeout of the first runs). For GPUNetIO and the tuned proxy at 400--800~kreq/s, each entry is the median over repeated runs (Table~\ref{tab:repeats}). At 700 and 800~kreq/s offered the client itself emitted 596--681~kreq/s.}
\label{tab:load}
\begin{tabular}{lrrrrrrrr}\toprule
server & offered & achieved & lost & \multicolumn{2}{c}{server residence ($\mu$s)} & \multicolumn{3}{c}{client RTT ($\mu$s)} \\
\cmidrule(lr){5-6}\cmidrule(lr){7-9}
 & kreq/s & kreq/s & \% & p50 & p99 & p50 & p99 & p99.9 \\ \midrule
launch, default & 10 & 10.1 & 0.0 & 161.1 & 174.4 & 249.8 & 425.8 & 549.6 \\
 & 50 & 24.1 & 51.8 & 162.0 & 166.2 & 11844.4 & 12397.1 & 12507.5 \\
 & 100 & 23.9 & 76.0 & 162.1 & 166.6 & 11984.0 & 12352.3 & 12433.8 \\
 & 200 & 23.6 & 88.0 & 162.2 & 166.5 & 11918.4 & 12185.4 & 12246.8 \\
 & 400 & 23.0 & 94.0 & 162.3 & 166.7 & 11887.1 & 12099.1 & 12131.7 \\
 & 800 & 22.5 & 96.1 & 162.8 & 167.4 & 11898.8 & 12124.2 & 12152.1 \\
\addlinespace
proxy, default & 10 & 10.1 & 0.0 & 163.2 & 188.9 & 180.4 & 387.8 & 501.7 \\
 & 50 & 50.2 & 0.0 & 175.1 & 220.3 & 592.5 & 1479.1 & 1631.9 \\
 & 100 & 100.5 & 0.0 & 178.0 & 224.8 & 587.2 & 1072.8 & 1181.6 \\
 & 200 & 200.9 & 0.0 & 185.4 & 237.6 & 474.5 & 742.8 & 795.0 \\
 & 400 & 401.5 & 0.0 & 208.5 & 290.8 & 481.4 & 636.9 & 691.2 \\
 & 800 & 499.2 & 27.5 & 334.8 & 494.1 & 959.1 & 1169.2 & 1205.0 \\
\addlinespace
launch, tuned & 10 & 10.1 & 0.0 & 160.9 & 170.0 & 172.3 & 264.1 & 315.7 \\
 & 50 & 24.3 & 51.3 & 161.6 & 165.0 & 11467.3 & 11589.9 & 11600.4 \\
 & 100 & 24.2 & 75.7 & 161.8 & 165.2 & 11504.1 & 11608.7 & 11618.8 \\
 & 200 & 23.9 & 87.8 & 161.7 & 165.7 & 11544.8 & 11617.6 & 11621.7 \\
 & 400 & 23.4 & 93.9 & 161.8 & 165.6 & 11585.3 & 11629.4 & 11635.8 \\
 & 800 & 22.8 & 96.1 & 161.8 & 165.7 & 11598.5 & 11646.0 & 11650.1 \\
\addlinespace
proxy, tuned & 10 & 10.1 & 0.0 & 162.9 & 176.7 & 173.6 & 188.1 & 201.4 \\
 & 50 & 50.3 & 0.0 & 161.6 & 185.9 & 172.9 & 197.3 & 204.5 \\
 & 100 & 100.5 & 0.0 & 162.2 & 188.3 & 174.2 & 200.5 & 207.2 \\
 & 200 & 201.0 & 0.0 & 164.5 & 193.3 & 178.4 & 208.1 & 215.2 \\
 & 400 & 401.8 & 0.0 & 173.4 & 207.6 & 192.6 & 229.0 & 242.3 \\
 & 600 & 602.4 & 0.0 & 207.0 & 264.4 & 239.8 & 327.0 & 363.0 \\
 & 700 & 627.9 & 0.0 & 222.1 & 287.4 & 263.2 & 381.2 & 614.0 \\
 & 800 & 641.6 & 0.0 & 235.2 & 316.8 & 288.3 & 608.5 & 699.1 \\
\addlinespace
GPUNetIO & 10 & 10.1 & 0.0 & 153.4 & 156.8 & 166.7 & 171.2 & 178.3 \\
 & 50 & 50.3 & 0.0 & 153.1 & 166.6 & 166.5 & 180.8 & 195.6 \\
 & 100 & 100.5 & 0.0 & 153.3 & 178.2 & 167.7 & 193.0 & 200.1 \\
 & 200 & 201.0 & 0.0 & 154.4 & 178.7 & 170.9 & 195.7 & 202.4 \\
 & 400 & 401.9 & 0.0 & 156.2 & 183.6 & 177.8 & 207.0 & 213.7 \\
 & 600 & 602.5 & 0.0 & 159.0 & 186.1 & 188.1 & 215.8 & 235.3 \\
 & 700 & 628.2 & 0.0 & 159.0 & 185.7 & 188.8 & 423.5 & 508.3 \\
 & 800 & 622.5 & 0.0 & 158.7 & 180.1 & 188.6 & 217.9 & 222.9 \\
\addlinespace
GPUNetIO, per-reply send & 10 & 10.1 & 0.0 & 153.4 & 160.6 & 169.1 & 176.9 & 195.3 \\
 & 50 & 50.3 & 0.0 & 152.9 & 177.7 & 169.5 & 194.6 & 203.7 \\
 & 100 & 100.5 & 0.0 & 153.4 & 180.5 & 171.4 & 199.0 & 206.7 \\
 & 200 & 201.0 & 0.0 & 157.3 & 186.0 & 178.2 & 209.1 & 218.8 \\
 & 400 & 401.5 & 0.0 & 249.9 & 484.9 & 311.5 & 537.0 & 564.6 \\
 & 800 & 333.7 & 40.9 & 23323.4 & 24793.2 & 24307.5 & 25902.0 & 25946.6 \\
\addlinespace
\bottomrule\end{tabular}

\end{table*}

\begin{table*}[tp]\centering\footnotesize
\caption{Repeated open-loop runs of the reference configuration with the current code (GPUNetIO and tuned proxy). Client RTT range over the runs, median in parentheses. One GPUNetIO run at 700~kreq/s, in which the client dropped 1.1\% of the replies although the server transmitted all of them, is excluded.}
\label{tab:repeats}
\begin{tabular}{lrrrrrr}\toprule
server & offered & runs & achieved & \multicolumn{3}{c}{client RTT ($\mu$s), range over runs (median)} \\
\cmidrule(lr){5-7}
 & kreq/s & & kreq/s & p50 & p99 & p99.9 \\ \midrule
GPUNetIO & 400 & 4 & 402 & 177--179 (178) & 204--209 (207) & 213--214 (214) \\
 & 600 & 3 & 602 & 187--188 (188) & 216--321 (216) & 220--531 (235) \\
 & 700 & 2 & 623--633 & 188--190 (189) & 206--641 (423) & 216--800 (508) \\
 & 800 & 4 & 618--628 & 187--189 (189) & 195--3127 (218) & 199--3326 (223) \\
\addlinespace
proxy, tuned & 400 & 4 & 402 & 191--194 (193) & 226--232 (229) & 236--247 (242) \\
 & 600 & 3 & 602 & 236--246 (240) & 313--347 (327) & 353--491 (363) \\
 & 700 & 3 & 617--646 & 262--286 (263) & 354--1427 (381) & 387--1598 (614) \\
 & 800 & 4 & 620--668 & 239--314 (288) & 408--876 (609) & 443--1032 (699) \\
\addlinespace
\bottomrule\end{tabular}

\end{table*}

\begin{figure*}[tp]\centering
\includegraphics[width=\textwidth]{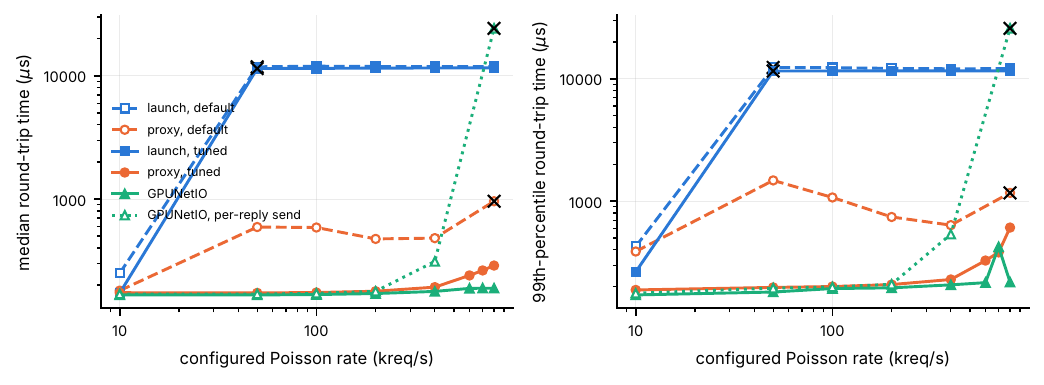}
\caption{Median (left) and 99th-percentile (right) round-trip time against the offered Poisson load, reference configuration; crosses mark the first run with lost requests. The four-stream launch server completes about 24~kreq/s under overload. At the highest offered load, GPUNetIO with per-reply sends completes 334~kreq/s with 41\% loss; batched GPUNetIO and the tuned proxy have no observed loss in the retained runs. Points of GPUNetIO and the tuned proxy at 400--800~kreq/s are medians over repeated runs. At the highest offered load, client injection falls below the configured rate.}
\label{fig:load}
\end{figure*}

Batched transmission keeps reply submission from limiting the request service. At the highest offered load, per-reply sends complete 334~kreq/s with a median RTT of 24~ms and 41\% request loss (Table~\ref{tab:load}). Batched descriptor posting avoids this bottleneck at the tested rates.

\section{Streaming dense couplings}\label{sec:stream}
\subsection{Workload and transfer volume}
The request service sends small field updates to a resident model. When the coupling matrix exceeds GPU memory, the streaming solver receives the matrix tile by tile. Positions, momenta, signs, and row accumulators remain on the receiving GPU, where workers consume arriving tiles to evaluate the dSB coupling force. A remote matrix generator provides a reproducible source at large scale. We measure coefficient reception together with coupling evaluation, including reuse of each tile across replicas.

A dense one-bit representation has logical size $n^2/8$ bytes. Frames contain a 64-byte header, including Ethernet, and $P_f=4032$ payload bytes, representing $C_b=32{,}256$ couplings from one row. The final column block is padded, so the transmitted payload is
\begin{equation}
V(n)=n\left\lceil n/C_b\right\rceil P_f.
\label{eq:volume}
\end{equation}
For $n=10^7$, the logical matrix is 12.5~TB, whereas a pass carries 12.53952~TB in $3.11\times10^9$ frames. For $n=10^4$, padding expands 12.5~MB to 40.32~MB. All streaming rates below count transmitted payload, including padding. For payload rate $R$ in bits/s, the transfer estimate is $t_{\mathrm{pass}}\approx8V/R$. A 4096-byte frame occupies 4120 bytes including FCS, preamble, and interpacket gap, giving an ideal payload ceiling of $400(4032/4120)=391.46$~Gb/s. On the current PCIe~5.0 server, the observed 307~Gb/s reaches approximately 78\% of this payload ceiling while performing the coupling accumulation; the topology and bandwidth scaling directions are discussed in Section~\ref{sec:bandwidth}.

\subsection{Generator and receiver}
The streaming runs use seed 42, $\Delta t=1.1$, and $\xi=0.5/\sqrt n$. The $n=10^7$, $B=1$ run uses two updates; its $B=8$ runs use ten. Reference comparisons at $n=10^4$ and $10^5$ use 20 updates. The generator runs on one GH200 GPU. Figure~\ref{fig:stream} summarizes the data and control paths. Synthesis blocks form frames in column-block-major order, and sender blocks post batches over one to eight queues. A pass ends with one \code{PASS\_END} per queue; the next pass starts after an acknowledgement. Random couplings are generated symmetrically from a hash of the seed and the unordered pair of indices. Mattis couplings are $J_{ij}=\eta_i\eta_j$ for $i\ne j$, where $\eta$ is a hashed sign vector. The transmitted diagonal is $+1$ and is removed during evaluation. Mattis instances have a known ground state; random instances exercise generic coupling accumulation.

The receiver has one cyclic queue containing 16,384 slots of 4096 bytes. A receive warp publishes bursts to 372 persistent worker blocks, each with 512 threads. Workers read frames in place, load the corresponding sign tile of each replica into shared memory, and accumulate XOR/popcount results into int32 row accumulators. At $n=10^7$ and $B=8$, the accumulators require 320~MB. The coupling sum is
\begin{equation}
a_i^{(b)}=n-2c_i^{(b)}-s_i^{(b)},
\end{equation}
where $c_i^{(b)}$ is the mismatch count over the full padded row. Padding bits in both operands are zero. The diagonal correction has the opposite sign from Proposition~\ref{prop:exact}(i), because that proposition stores a zero diagonal bit whereas the stream stores a one. For the measured range $n\le10^7$, the count fits int32 and $|a_i|\le n-1<2^{24}$, so conversion of the coupling sum to FP32 remains exact. The energy reduction uses 64-bit integers.

The host coordinates pass boundaries: it observes end markers, waits for published bursts to finish, and requests an in-kernel update through mapped control memory. Workers evaluate the energy of the \emph{incoming} signs, advance the dynamical state when an update is requested, and compute a checksum of the resulting sign bitmap. Thus an updating record contains pre-update energy and post-update checksum. A run of $K$ updates requires $K+1$ matrix passes, with the last pass evaluating the final energy without advancing the state. Packet reception and coupling accumulation execute on the GPU, with host coordination at pass boundaries.

The reported receive-pass interval ends when the host observes the end markers, which it polls every 2~ms. Full iteration time also includes draining outstanding work, updating the state, and acknowledging the pass; these control operations use host polling.

Work-ring acknowledgements protect work-item reuse. The cyclic packet buffer is managed separately, so workers must finish consuming each burst before the NIC wraps around. At the measured rate, the buffer holds approximately 1.7~ms of arriving frames. The evaluated implementation uses mapped host memory for its receive/control counters.

\begin{figure}[!t]\centering
\includegraphics[width=\columnwidth]{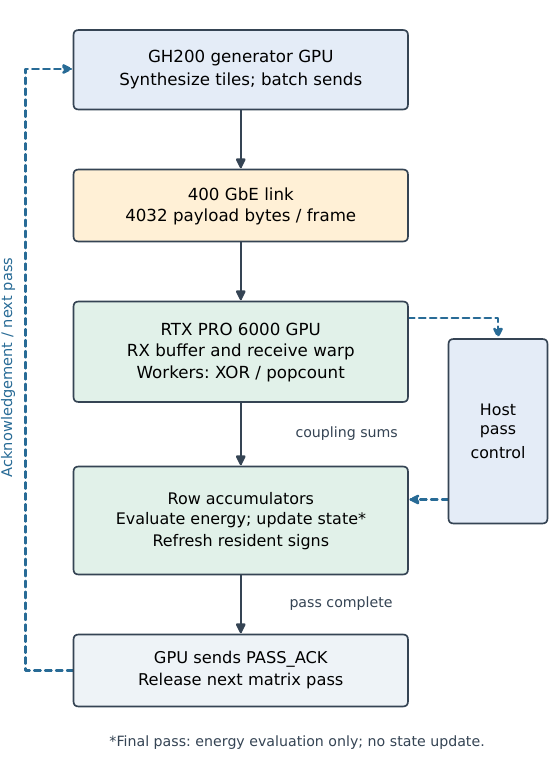}
\caption{Streaming implementation. Solid arrows show coefficient delivery and GPU computation. Dashed arrows show pass-boundary coordination and the GPU-sent acknowledgement that releases the generator for the next pass. Host control follows reception and draining of outstanding work. Each of the $K$ updates requires a matrix pass, followed by one final-energy pass.}
\label{fig:stream}
\end{figure}

\subsection{Validation}\label{sec:stream:correct}
Validation combines stream accounting, planted-solution checks, and reference execution. The receiver compares parsed stream-frame totals with the end markers and records dropped bursts. Mattis instances provide the exact ground-state value $2E=-n(n-1)$ for checking the accumulated energy and solver integration. At $n=10^5$, the Mattis and random energy/checksum records match a resident GPU reference byte for byte; at $n=10^4$, the random records also match a CPU reference. These records contain the best-replica energy and a 64-bit spin-bitmap checksum at each recorded step.

\subsection{Measured results}\label{sec:stream:meas}
Table~\ref{tab:stream} reports the streaming measurements. The $n=10^7$, $B=1$ Mattis run passes all recorded counter checks and reaches $2E=-99{,}999{,}990{,}000{,}000$ after its first update. Its later passes take 326.733~s at approximately 307.03~Gb/s. The 12.5-TB matrix is never stored on the receiver.

At $n=10^6$, the reported runs with $B=1$ and $B=8$ and one to eight sender queues have similar rates near 307~Gb/s. This supports amortizing matrix transmission across the tested replicas. The runs at $n=10^5$ and $10^4$ serve as reference checks. At $n=10^4$, a pass carries 40~MB, equivalent to about 1~ms at 307~Gb/s. The host polls for end markers every 2~ms, so the recorded 2.1-ms interval includes a substantial polling delay.

At $n=10^7$ with $B=8$, the Mattis run achieves 306.94--307.01~Gb/s and all eight replicas reach the known ground-state energy. The random instance achieves 306.20--306.41~Gb/s; its best final energy is $2E=-41{,}168{,}653{,}412$ after ten updates. Each update requires a 327-s pass. The random-instance run tests coupling evaluation over ten updates; it does not establish convergence. Every pass contains the expected $3.11\times10^9$ parsed stream frames, with zero reported missing stream frames and zero dropped bursts.

\begin{table*}[tp]\centering\footnotesize
\caption{Streaming observations from the preserved records. Rate and receive-pass time are ranges after excluding pass 0, whose interval includes the start-up of the generator, and, for $n\le10^5$, pass 1, which contains a one-time delay of about 10~ms. Payload includes padding. All listed passes have matching parsed stream counts and zero dropped bursts. Reference matches concern energy/checksum records.}
\label{tab:stream}
\begin{tabular}{lrrrrrl}\toprule
Instance & $n$ & $B$ & $N_q$ & Payload (Gb/s) & Receive pass (s) & Evidence \\ \midrule
Mattis & $10^7$ & 1 & 4 & 307.0 & 326.73 & $2E=-n(n-1)$ at step 1 \\
Mattis & $10^7$ & 8 & 4 & 306.9--307.0 & 326.75--326.83 & $2E=-n(n-1)$ at step 1 \\
random & $10^7$ & 8 & 4 & 306.2--306.4 & 327.39--327.61 & $2E=-4.12\times10^{10}$ at step 10 \\
Mattis & $10^6$ & 8 & 1--8 & 307.0--307.1 & 3.361--3.362 & $2E=-n(n-1)$ \\
Mattis & $10^6$ & 1 & 4 & 307.0 & 3.362 & $2E=-n(n-1)$ \\
random & $10^6$ & 8 & 4 & 307.0--307.1 & 3.361--3.362 & energy reported \\
Mattis & $10^5$ & 8 & 4 & 308.8--309.4 & 0.0417--0.0418 & GPU records match \\
random & $10^5$ & 8 & 4 & 308.8--311.5 & 0.0414--0.0418 & GPU records match \\
random & $10^4$ & 8 & 4 & 153.6--153.7 & 0.0021 & CPU records match \\
\bottomrule
\end{tabular}

\end{table*}

At $B=8$, each received coefficient contributes to eight coupling evaluations. The similar payload rates for one and eight replicas show that this reuse amortizes matrix transfer across the tested replica batch.

\section{Discussion}\label{sec:discussion}
\subsection{Request-rate and model-size scaling}
The resident-model and streaming modes place different demands on the same communication path. Small field updates require low scheduling overhead across many concurrent solves. Dense coupling tiles require sustained bandwidth while the solver state stays resident. In both cases, GPU workers consume coefficients directly from the receive path.

For resident models, the main gain appears under load. The unloaded latency is close to the tuned proxy's, but GPUNetIO's median RTT rises by 6\% from 400 to 800~kreq/s offered, compared with 50\% for the proxy. Each CPU-mediated baseline dedicates a polling core to receiving requests and returning results; GPUNetIO performs that work on the device after setup. Keeping packet handling on the GPU also reduces dependence on host scheduling and interrupt settings.

The request-per-block mapping allows independent solves to run concurrently, and batched descriptor posting drains their completed replies. Both are needed to keep communication and solver execution at comparable rates.

For model-size scaling, streaming separates quadratic coefficient storage from the resident solver state. A dense one-bit matrix requires $O(n^2)$ storage, whereas the dynamical state, sign bitmaps, and accumulators require $O(Bn)$. At $n=10^7$, the 12.5-TB logical matrix is consumed through a 64-MiB packet buffer, with 320~MB of row accumulators for eight replicas. Device memory holds the solver state and receive buffers; the time per update is then governed by coefficient-delivery bandwidth. Each incoming tile is reused across replicas, but every update requires another matrix pass.

\subsection{External coefficients and changing models}
Streaming separates coefficient production from solver execution. Potential sources include remote memory, storage arrays, and simulation or learning processes that construct successive models. An online application can update $h_{\ell}$ while reusing $J_{\ell}$, replace a resident model between requests, or supply a new sequence of dense tiles when the full model cannot be retained. When the coefficients admit a low-rank or sparse representation, preserving that structure offers an additional reduction in model traffic.

The measured 306--307~Gb/s corresponds to approximately 38.3~GB/s of coefficient payload. For scale, a Samsung 9100 PRO PCIe~5.0 NVMe SSD specifies sequential reads up to 14.8~GB/s \cite{ssd9100}; feeding the measured stream from such storage would call for parallel drives. The coefficient source must sustain this bandwidth to feed the receiver continuously.

\subsection{Topology and bandwidth scaling}\label{sec:bandwidth}
The current PCIe~5.0 server sustains 307~Gb/s. Its reported \code{SYS}/\code{NODE} GPU-to-NIC paths traverse host interconnects, making GPU/NIC placement one target for improving this rate. A direct PCIe-switch path, sufficient upstream bandwidth, and balanced queue placement are candidates for reducing the gap to the 400-GbE link's 391.46-Gb/s payload ceiling. The similar rates for one and eight replicas suggest room to increase delivery bandwidth while retaining replica reuse.

For payload volume $V$ and rate $R$, streaming incurs a transfer term $8V/R$ per pass. A trajectory of $K$ updates uses $K+1$ passes, including final-energy evaluation. The present $n=10^7$ configuration takes approximately 327~s per receive pass. At the ideal payload ceiling of the same 400-GbE link, the transfer term is approximately 256~s; an 800-GbE link with the same frame efficiency reduces this ideal term to approximately 128~s. These estimates describe bandwidth scaling at fixed model size. A distributed extension could partition matrix rows across $G$ receivers, each sustaining payload rate $R$. With sufficient aggregate source bandwidth, the ideal transfer term becomes $8V/(GR)$, plus synchronization and sign exchange; each replica's sign bitmap contains $n/8$ bytes.

PCIe~6.0 doubles the nominal bandwidth of PCIe~5.0 at the same lane count, increasing the per-lane data rate from 32 to 64~GT/s \cite{pcie6}. Combining this wider host interconnect with 800-Gb/s networking expands the available coefficient-delivery bandwidth. Newer ConnectX platforms provide this combination, and the BlueField-4 data processing unit (DPU) integrates PCIe~6.0, an 800-Gb/s network interface, programmable processing, and infrastructure acceleration \cite{supernic,bluefield}. The coefficient source, PCIe fabric, and GPU consumer must sustain the higher network rate. A DPU could handle communication control alongside GPU coupling evaluation.

\subsection{Scaling beyond per-packet CPU control}
The slowdown in single-thread processor performance growth strengthens the case for reducing frequent host coordination as network and PCIe bandwidth increase \cite{hennessy2019}. This concern also applies after kernel bypass: in NVIDIA's Aerial example, CPU coordination became a bottleneck as the number of radio units increased in a DPDK GPUdev pipeline, motivating GPU-controlled reception with GPUNetIO \cite{gpunetioblog}. Queue service, completion handling, and CPU--GPU notification all contribute to this coordination cost. At the 4120-byte wire occupancy used here, 400 and 800~Gb/s correspond to approximately 12.1 and 24.3 million packets/s, respectively. A serial packet-processing stage would therefore have only 82.4 and 41.2~ns per packet to keep pace. Batching, multiple cores, and offloads help host-mediated designs meet this demand. GPU initiation moves frequent queue service and packet consumption to device-side agents, where each received tile is reused across parallel solver replicas. This reduces CPU--GPU notifications, while throughput still depends on topology, batching, queue placement, and GPU occupancy.

\subsection{Implementation scope}\label{sec:scope}
The request evaluation uses one platform and two socket-based baselines. A matched DPDK/GPUdev baseline using the same solver, packet format, and batching policy is a useful next comparison to separate GPU-initiation gains from kernel-bypass gains. Deployment extensions include synchronized replacement of in-use models, retransmission and per-frame sequence validation, and distributing streaming work across multiple receivers. The current request bounds remain one frame per solve, one 512-KiB model slot, and the register/shared-memory limits of Section~\ref{sec:solver}.

\section{Conclusion}\label{sec:conclusion}
We implemented dSB as a GPU-initiated request service and a solver for streamed dense couplings. The request service keeps receive, solve, and reply operations on the GPU, using one worker block per request and batched transmission. It requires no dedicated server CPU data-path core. From 400 to 800~kreq/s offered, median RTT rises by 6\%; at the highest tested load, median and 99th-percentile latencies are lower than those of the tuned persistent CPU proxy by factors of 1.5 and 2.8 across repeated runs.

The streaming solver retains $O(Bn)$ state and consumes the $O(n^2)$ coefficients tile by tile. It evaluates ten-million-variable binary coupling matrices at 306--307~Gb/s for one or eight replicas. Known Mattis ground states and smaller-scale reference comparisons validate the execution path. The two modes extend dSB along different axes: concurrent decisions with a resident model, and larger models supplied as coefficient streams. Faster interconnects and multiple receivers could reduce the matrix-pass time, provided the coefficient source and GPU consumers sustain the aggregate bandwidth.

\section*{Code and data availability}
The implementation and experimental data are available in the author's GitHub repository at \url{https://github.com/yaochengchen/dsb-gpunetio}. The companion GPU solver is available at \url{https://github.com/yaochengchen/dSB_GPUs_3Million_N} \cite{ourrepo}.

\section*{Acknowledgment}
The author acknowledges support from FAPESP (grants 25/25586-5 and 2021/02120-0). Generative AI tools (OpenAI ChatGPT, Anthropic Claude and Google Gemini) were used to assist with software development, and ChatGPT for language editing; the author reviewed and verified all AI-assisted output and is responsible for the manuscript and the software.

\section*{Conflict of Interest}
The author declares no conflicts of interest.

\raggedbottom

\begin{IEEEbiographynophoto}{Yaocheng Chen}
received the B.S. degree in physics and electronics from Central South University in 2015, and the M.S. and Ph.D. degrees in physics from National Taiwan University in 2017 and 2025, respectively. He is a FAPESP postdoctoral fellow at IAG-USP, Universidade de S\~ao Paulo, Brazil. Previously, he was an algorithm scientist at Jingdong, China; a project researcher and visiting scholar at the Quantum Science Center of Guangdong-Hong Kong-Macao Greater Bay Area; and a KASI postdoctoral fellow at the Korea Astronomy and Space Science Institute (KASI), Republic of Korea. His interests include high-energy astroparticle physics, extragalactic astrophysics, GPU computing and FPGA systems. He is a member of the ASTRI Mini-Array and TAROGE collaborations, the CTAO Consortium and the Space Array team. His previous collaborations include ANITA/PUEO, ARA and BURSTT; he was also a member of the ALMA Total Power GPU Spectrometer (TPGS) team. ORCID: \url{https://orcid.org/0000-0002-8967-4911}.
\end{IEEEbiographynophoto}

\EOD
\end{document}